\def\thefigure{\thesection.\@arabic\c@figure}
\def\fps@figure{h, t}
\def\thetable{\thesection.\@arabic\c@table}
\def\fps@table{h, t}
\newtheorem{theorem}{Theorem}
\newtheorem{remark}[theorem]{Remark}
\numberwithin{theorem}{section}
\newenvironment{proof}[1][Proof]{\textbf{#1.} }{\ \rule{0.5em}{0.5em}}
\def\be{\begin{equation}}
\def\ee{\end{equation}}
\def\bea{\begin{eqnarray}}
\def\eea{\end{eqnarray}}
\def\ba{\begin{array}}
\def\ea{\end{array}}
\def\bOm{\boldsymbol{\Omega}}
\newcommand{\rem}[1]{}
\newcommand{\de}{\delta}
\newcommand{\bGam}{\boldsymbol{\Gamma}}
\newcommand{\bom}{\boldsymbol{\omega}}
\newcommand{\bgam}{\boldsymbol{\gamma}}
\newcommand{\bsigma}{\boldsymbol{\Sigma}}
\newcommand{\bpsi}{\boldsymbol{\Psi}}
\newcommand{\bmu}{\boldsymbol{\mu}}
\newcommand{\bpi}{\boldsymbol{\pi}}
\newcommand{\bchi}{\boldsymbol{\chi}} 
\newcommand{\pp}[2]{\frac{\partial #1}{\partial #2}}
\newcommand{\dede}[2]{\frac{\delta #1}{\delta #2}}
\newcommand{\prt}{\partial}
\newcommand{\lp}{\left(}
\newcommand{\rp}{\right)}
\newcommand{\todo}[1]{\vspace{5 mm}\par \noindent
\framebox{\begin{minipage}[c]{0.95 \textwidth}
\tt #1 \end{minipage}}\vspace{5 mm}\par}
\title{Geometric theory of flexible and expandable tubes conveying fluid: equations, solutions and shock waves}
\author{Fran\c{c}ois Gay-Balmaz$^1$ and 
Vakhtang Putkaradze$^2$ \vspace{0.2cm}\\
\small $^1$ CNRS \& LMD-IPSL, Ecole Normale Sup\'erieure de Paris, France\\
\small $^2$Department of Mathematical and Statistical Sciences\\
\small University of Alberta, Edmonton, AB   T6G 2G1 Canada
\\
}
\date{}
\begin{document}
\maketitle

\begin{abstract}
We present a theory for the three-dimensional evolution of tubes with expandable walls conveying fluid. Our theory can accommodate arbitrary deformations of the tube, arbitrary elasticity of the walls, and both compressible and incompressible flows inside the tube. We also present the theory of propagation of shock waves in such tubes and derive the conservation laws and Rankine-Hugoniot conditions in arbitrary spatial configuration of the tubes, and compute several examples of particular solutions. The theory is derived from a variational treatment of Cosserat rod theory extended to incorporate expandable walls and moving flow inside the tube. The results presented here are useful for biological flows and industrial applications involving high speed motion of gas in flexible tubes.
\end{abstract}


\section{Introduction}
Tubes with flexible walls conveying fluid are encountered frequently in nature and exhibit complex behavior due to the interaction of fluid and elastic walls dynamics. The theory for such tubes is indispensable in biomedical applications, such as arterial flows \cite{PeLu1998,QuTuVe2000,FoLaQu2003,StWaJe2009,Tang-etal-2009} and lung flows \cite{ElKaSh1989,MaFl2010,Do2016}. Analytical studies for such flows are usually limited to cases when the centerline of the tube is straight, which restricts the utility of the models for practical applications. Of interest to this work is the mechanism of instability through the neck formation and self-sustaining flow pulsations suggested by \cite{Pe1992} and experimentally measured in \cite{KuMa1999}. While substantial progress in the analysis of the flow has been achieved so far, it was difficult to describe analytically the general dynamics of 3D deformations of the tube involving \emph{e.g.} the combination of shear, transversal deformation, and extensions.

 The numerical simulations of flow in channels with flexible walls, such as  undertaken in \cite{LuPe1998} for the case of a two-dimensional channel, yield a wealth of details of the velocity profile for the fluid, but have to restrict the dimensionality of the motion. For more informations about the application of collapsible tubes to biological flows, we refer the reader to the reviews \cite{GrJe2004,HeHa2011}. 

On the other hand, studies involving non-trivial dynamics of the centerline have a long history in the context of engineering applications also loosely called the ``garden hose instabilities". One of the earliest works on the subject is \cite{AsHa1950}. Benjamin \cite{Be1961a,Be1961b} was perhaps the first to formulate a quantitative theory for the 2D dynamics  of {\it initially straight tubes} by considering a linked chain of tubes conveying fluids and using an augmented Hamilton principle of critical action that takes into account the momentum of the jet leaving the tube. A continuum equation for the linear disturbances was then derived as the limit of the discrete system. This linearized equation for the initially straight tubes was further studied by Gregory and Pa\"idoussis \cite{GrPa1966a}. 

These initial developments formed the basis for further stability analysis for finite, initially straight tubes \cite{Pa1970,PaIs1974,Pa1998,ShMi2001,DoLa2002,PaLi1993,Pa2004,AkIvKoNe2013,AkGeNe2015,AkGeNe2016}, which showed  a reasonable agreement with experimentally observed onset of the instability \cite{GrPa1966b,Pa1998,KuSa2005,CaCr2009,Cr-etal-2012}. Models treating nonlinear deflection of the centerline  were also considered in \cite{SeLiPa1994,Pa2004,MoPa2009,GaPaAm2013}, and the compressible (acoustic) effects in the flowing fluid in \cite{Zh2008}.  Alternatively,  a more detailed 3D theory of motion for the flow was developed in \cite{BeGoTa2010} and extended in \cite{RiPe2015}. That theory was based on a modification of the Cosserat rod treatment for the description of elastic dynamics of the tube, while keeping the cross-section of the tube constant and orthogonal to the centerline. In particular, \cite{RiPe2015} analyzes several non-straight configurations, such as a tube deformed from its straight state by gravity,  both from the point of view of linear stability and nonlinear behavior. Unfortunately, this Cosserat-based theory cannot easily incorporate the effects of the cross-sectional changes in the dynamics, either prescribed (e.g. a tube with variable cross-section in the initial state) or dynamically occurring.  

Several authors have discussed the instability from the point of view of the {\it follower force approach}, which treats the system as an elastic beam, ignoring the fluid motion,  with a force that is always tangent to the end of the tube,  modeling the effect of the jet leaving the nozzle \cite{BoRoSa2002}. However, once the length of the tube becomes large enough, the validity of the follower force approach has been questioned, see  \cite{El2005} for a lively and thorough discussion.
For the history of the development of this problem in the Soviet/Russian literature, we refer the reader to the monograph \cite{Sv1987} (still only available in Russian). The developments in Russian literature  have often proceeded in parallel with the English literature counterpart, see \emph{e.g.} \cite{Mo1965,Mu1965,Il1969,Anni1970,VoGr1973,Sv1978,Do1979,Ch1984,So2005,AmAl2015}. The theory of stability of initially curved pipes have also been considered because its importance for practical applications, such as pipelines.
The equations of motion for the theory were derived using the balance of elastic forces arising from the tube deformation and  fluid forces acting on the tube when the fluid is moving along a curved line in space.
In the western literature, we shall mention the earlier work \cite{Ch1972}, followed by subsequent studies \cite{MiPaVa1988a,MiPaVa1988b,DuRo1992} which developed the theory suited for both extensible and inextensible tubes and discussed the finite-element method realization of the problem. We shall also mention the works \cite{DoMo1976,AiGi1990} deriving a variational approach for the planar motions of initially circular tubes, although the effect of curved fluid motion was still introduced as an extra force through the Lagrange-d'Alembert principle.  In the Soviet/Russian literature,   \cite{Sv1978} developed a rod-based theory of oscillations and \cite{So2005} considered an improved treatment of forces acting on the tubes. Most of the work has been geared towards the understanding of the planar cases with in-plane vibrations as the simplest and most practically relevant situations (still, however, leading to quite complex formulas). 

Finally, it is worth discussing the shell theory approach to the dynamics of fluid-filled pipes \cite{AmPePa1999a,AmPePa1999b}. These theories are usually derived for relatively small (but perhaps nonlinear) deviations of an initially cylindrical shell from its equilibrium position. They are capable of describing quite complex modes of oscillations along the circumference of the tubes, but are too cumbersome to extend to tubes with moving centerline. 
We also refer the reader to the classical treatise \cite{Pa1998,Pa2004} on fluid-structure interactions, where the developments of theory and experiment are discussed in great details. 

In spite of considerable progress achieved so far,  there is still much room for improvement in the theoretical understanding of the problem. In particular, we believe that the derivation of the theory based on the balance of forces is not variational and the approximations of certain terms tend to break down the intrinsic variational structure of the problem. In contrast, the theory of flexible tubes conveying fluid as developed in \cite{FGBPu2014,FGBPu2015} is truly variational and all the forces acting on the tube and the fluid, as well as the boundary forces, are derived automatically from a variational principle issued from Hamilton's principle. More importantly, without a variational approach it is very difficult (and perhaps impossible)  to extend the previous theory to accurately take into account the changes in the cross-sectional area of the tube, also called the collapsible tube case.

In many previous works, the effects of cross-sectional changes have  been considered through the quasi-static approximation: if $A(t,s)$ is the local cross-section area, and $u(t,s)$ is the local velocity of the fluid, with $t$ being the time and $s$ the coordinate along the tube,  then the quasi-static assumption states that  $uA=$ const, see the corresponding discussion in \cite{FGBGePu2018}. The use of the quasi-static approximation is due to the fact that an accurate description of conservation law for a moving tube is difficult to obtain in the traditional approach when the tube undergoes  motion of rotations and translations. Interestingly, most works on modelling flows in a tube with flexible walls and \emph{a straight centerline with no rotation} do not use this approximation and employ the correct conservation law. The connection between models derived in the present paper and that work, most often used in the context of hemodynamics  \cite{PeLu1998,KoMa1999,JuHe2007,StWaJe2009,Tang-etal-2009,HeHa2011}, see also recent review \cite{Se2016}, is further discussed in \S\ref{sec:comparison}. This problem has been addressed by  the authors for general tube motion and prescribed cross-sectional area as a function of tube's deformations \cite{FGBPu2014,FGBPu2015}, where a geometrically exact setting for dealing with a variable cross-section was developed and studied, showing the important effects of the cross-sectional changes on both linear and nonlinear dynamics, as well as boundary effects. In these works by the authors, the cross-sectional area was considered to be a given function of the local deformation of the rod, in other words, the wall of the tube did not have its own dynamics, which is a commonly used approximation in the field. This theory has been further developed in \cite{FGBGePu2018}, where it was shown that thanks to the geometric approach and variational principle used, the linear stability of an arbitrary helical tube reduces to the analysis of a  constant matrix. In addition, the variational derivation for that type of fluid-structure interaction allowed for the development of variational numerical schemes \cite{FGBPu2016}. 

In this paper, we considerably extend the previous results and construct a complete theory of fluid-conveying tubes capable of incorporating the wall dynamics as well as  arbitrary deformations of the tube with general elastic and inertia properties, as long as the flow inside the tube can be approximated by a one-dimensional ``plug flow". Our theory covers both compressible and incompressible flows and can describe effects like propagation of shock waves in the tubes in arbitrary tube geometries, which is important for industrial applications. Mathematically, our theory considers the left symmetry-reduced dynamics of the elastic tube in the convective representation (i.e. in the frame of reference of the tube), the fluid motion in the spatial (Eulerian) representation in the tube's frame of reference, the Euler-Lagrange equations of the wall motion, and either the holonomic constraint expressing the incompressibility of the fluid, or the equation for mass conservation for compressible flows.

\section{Exact geometric theory for flexible tubes conveying compressible fluid}\label{sec:general}

In this section, we first quickly review the Lagrangian variational formulation for geometrically exact rods without fluid motion. Then, we extend this variational formulation to incorporate the motion of the compressible fluid inside the tube and the motion of the wall. To achieve this goal we need to first identify the infinite dimensional configuration manifold of the system as well as the convective and spatial variables, together with their relation with the Lagrangian variables.
Our model is then obtained by an application of the Hamilton principle, reformulated in convective variables for the tube and in spatial variables for the fluid. The rigorous justification of the variational approach will be given later in Section \ref{Geometric_approach}. We shall then study the conditions the shock solutions must satisfy, thereby extending the classical Rankine-Hugoniot to the case of shock waves moving in a tube that can  freely deform in the three dimensional space and with expandable cross-section.

\subsection{Background for geometrically exact rod theory}

In this paragraph we briefly review the theory of geometrically exact rods following the approach developed, on the Hamiltonian side, in \cite{SiMaKr1988} and subsequently in the Lagrangian framework more appropriate to this article in \cite{HoPu2009,ElGBHoPuRa2010}. A more comprehensive introduction is contained in \cite{FGBGePu2018} to which we refer the reader for the details. The geometrically exact theory is based on the original work \cite{CoCo1909}.

The configuration of the rod deforming in the ambient space $ \mathbb{R}  ^3 $ is defined by specifying the position of its line of centroids by means of a map $\mathbf{r} (t,s)\in \mathbb{R}^3$, and by giving the orientation of the cross-section at that point. Here $t$ is the time and $s\in [0,L]$ is a parameter along the strand that does not need to be arclength. The orientation of the cross-section is given by a moving basis $\{ \mathbf{e} _i(t,s) \mid i=1,2,3\}$ attached to the cross section relative to a fixed frame $\{ \mathbf{E} _i \mid i=1,2,3\}$. The moving basis is described by an orthogonal transformation $ \Lambda (t,s) \in SO(3)$ such that $ \mathbf{e} _i (t,s)= \Lambda (t,s) \mathbf{E} _i $.
We interpret the maps $ \Lambda (t,s)$ and $\mathbf{r}(t,s)$ as a curve $t\mapsto (\Lambda(t),\mathbf{r}(t))\in G$ in the infinite dimensional Lie group $G= \mathcal{F} ([0,L], SO(3) \times \mathbb{R} ^3  )$ of $SO(3) \times \mathbb{R} ^3$-valued smooth maps defined on $[0,L]$. The Lie group $G$ is the configuration manifold for the geometrically exact rod.

Following Hamilton's principle, given a Lagrangian function
\[
\mathsf{L}=\mathsf{L}( \Lambda , \dot {\Lambda }, \mathbf{r} , \dot{\mathbf{r}}):TG \rightarrow \mathbb{R},
\]
defined on the tangent bundle $TG$ of the configuration Lie group $G$, the equations of motion are the Euler-Lagrange equations obtained by the critical action principle
\begin{equation}\label{HP_rod} 
\delta \int_0^T\mathsf{L}( \Lambda , \dot {\Lambda }, \mathbf{r} , \dot{\mathbf{r}})\mbox{d}t=0,
\end{equation} 
for arbitrary variations $\delta\Lambda$ and $\delta  \mathbf{r} $ vanishing at $t=0,T$.
It turns out that the Lagrangian of geometrically exact rods can be exclusively expressed in terms of the convective variables 
\begin{equation}\label{def_conv_var}
\begin{aligned}
&\bgam=\Lambda^{-1} \dot{\mathbf{r} }\,,  &\quad 
&\bom=\Lambda^{-1} \dot\Lambda\,,
\\ 
&\bGam=\Lambda^{-1} \mathbf{r} '\,, &\quad
&\bOm=\Lambda^{-1} \Lambda'\,,
\end{aligned} 
\end{equation} 
see \cite{SiMaKr1988}, where $ \boldsymbol{\gamma} (t), \boldsymbol{\omega} (t)\in \mathcal{F} ([0,L], \mathbb{R}  ^3 )$ are the linear and angular convective velocities and $ \boldsymbol{\Gamma} (t), \boldsymbol{\Omega}(t)\in \mathcal{F} ([0,L], \mathbb{R}  ^3 )$ are the linear and angular convective strains. This gives rise to a Lagrangian $\ell=\ell( \boldsymbol{\omega} , \boldsymbol{\gamma} , \boldsymbol{\Omega} , \boldsymbol{\Gamma} ): \mathcal{F}([0,L], \mathbb{R}  ^3 )^4  \rightarrow \mathbb{R}  $ written exclusively in terms of convective variables. For the moment, we leave the Lagrangian function unspecified, we will give its explicit expression later in \S\ref{Def_Lagrangian} for the case of fluid-conveying tubes.

The equations of motion in convective description are obtained by writing the critical action principle \eqref{HP_rod} in terms of the Lagrangian $\ell$. This is accomplished by computing the constrained variations of $\boldsymbol{\omega} , \boldsymbol{\gamma} , \boldsymbol{\Omega} , \boldsymbol{\Gamma} $ induced by the free variations $ \delta \Lambda , \delta\mathbf{r} $ via the definitions \eqref{def_conv_var}. 
We find
\begin{align} 
& \delta \boldsymbol{\omega} = \frac{\partial \boldsymbol{\Sigma} }{\partial t} +\boldsymbol{\omega} \times \boldsymbol{\Sigma} , \qquad \delta \boldsymbol{\gamma} = \frac{\partial \boldsymbol{\psi} }{\partial t} + \boldsymbol{\gamma} \times \boldsymbol{\Sigma} + \boldsymbol{\omega} \times \boldsymbol{\Psi} 
\label{delta1} 
\\
& \delta \boldsymbol{\Omega} = \frac{\partial \boldsymbol{\Sigma} }{\partial s} +\boldsymbol{\Omega} \times \boldsymbol{\Sigma} , \qquad \delta \boldsymbol{\Gamma} = \frac{\partial \boldsymbol{\psi} }{\partial s} + \boldsymbol{\Gamma} \times \boldsymbol{\Sigma} + \boldsymbol{\Omega} \times \boldsymbol{\Psi},
\label{delta2} 
\end{align} 
where $ \bsigma(t,s)=\Lambda(t,s)^{-1} \de \Lambda(t,s)\in \mathbb{R}  ^3 $ and $\bpsi (t,s)= \Lambda(t,s)^{-1} \de \mathbf{r} (t,s) \in \mathbb{R}  ^3 $ are arbitrary functions vanishing at $t=0,T$.
Hamilton's principle \eqref{HP_rod} induces the variational principle
\begin{equation}\label{Reduced_HP_rod} 
\delta \int_0^T\ell( \boldsymbol{\omega} , \boldsymbol{\gamma} , \boldsymbol{\Omega} , \boldsymbol{\Gamma} )\mbox{d}t=0,
\end{equation} with respect to the constrained variations \eqref{delta1}, \eqref{delta2}, which in turn yields the reduced Euler-Lagrange equations
\begin{equation}
\left\lbrace\begin{array}{l}
\displaystyle\lp \prt_t + \bom\times\rp\dede{\ell}{\bom}+\bgam\times\dede{\ell}{\bgam} +\lp\prt_s + \bOm\times\rp\dede{\ell}{\bOm} +\bGam\times \dede{\ell}{\bGam} =0\\
\displaystyle\lp \prt_t + \bom\times\rp\dede{\ell}{\bgam} + \lp\prt_s + \bOm\times\rp  \dede{\ell}{\bGam}=0,
\end{array}\right.
\end{equation} 
together with the boundary conditions
\begin{equation}\label{BC_rod} 
\left.\frac{\delta \ell}{\delta \boldsymbol{\Omega} }\right |_{s=0,L}=0\quad \left.\frac{\delta \ell}{\delta \boldsymbol{\Gamma } }\right |_{s=0,L}=0.
\end{equation} 
If one of the extremity (say $s=0$) of the rod is kept fixed, \emph{i.e.},  $ \mathbf{r} (t,0)=\mathbf{r} _0 $, $ \Lambda (t,0)= \Lambda _0 $ for all $t$, then only the boundary condition at $s=L$ arises above.

From their definition \eqref{def_conv_var}, the convective variables verify the compatibility conditions
\begin{equation}\label{compat_rod} 
 \partial _t \boldsymbol{\Omega} = \boldsymbol{\omega} \times \boldsymbol{\Omega} +\partial _s  \boldsymbol{\omega}\quad\text{and}\quad  \partial _t \boldsymbol{\Gamma} + \boldsymbol{\omega} \times \boldsymbol{\Gamma} = \partial _s \boldsymbol{\gamma} + \boldsymbol{\Omega} \times \boldsymbol{\gamma}.
\end{equation}

\begin{remark}[Lagrangian reduction by symmetry]\rm  The process of passing from the Lagrangian (or material) representation in terms of $ (\Lambda , \dot{ \Lambda }, \mathbf{r} , \dot{ \mathbf{r} })$ with variational principle \eqref{HP_rod} to the convective representation in terms of $(\boldsymbol{\omega} , \boldsymbol{\gamma} , \boldsymbol{\Omega} , \boldsymbol{\Gamma})$ with constrained variational principle \eqref{Reduced_HP_rod} can be understood via a Lagrangian reduction process by symmetries. It has been carried out in \cite{ElGBHoPuRa2010} and is based on the affine Euler-Poincar\'e reduction theory of \cite{GBRa2009}.  We shall review this approach later, in Section \ref{Geometric_approach}, before extending it to fluid-conveying tubes.
\end{remark}

\begin{remark}[On the functional form of the Lagrangian] \rm Note that in all our theoretical considerations we will keep the Lagrangian in the general form, as we are interested in the symmetry-reduction approach to the fully three dimensional problem, rather than in the derivation of the equations of motion in a particular reduced setting,  \emph{e.g.}, restricted to two dimensions, straight line, \emph{etc}. We believe that such an approach based on Lagrangian mechanics yields the simplest possible treatment of the elastic, three-dimensional deformation of the tube. We shall present class of Lagrangian functions in \S\ref{Def_Lagrangian}. In particular, we shall consider a particular class of quadratic elastic energy, see \eqref{e_rod} later, allowing for further substantial simplification for the motion when constrained in two dimensions. 
\end{remark}

\subsection{Definition of the configuration space}

We now incorporate the motion of the fluid inside the tube and the  motion of the wall of the tube  by  extending  the geometrically exact framework.
Recall that the geometrically exact rod (without fluid) consists of a left invariant system and, therefore, can be written in terms of convective variables. On the other hand, the fluid is a right invariant system, naturally written in terms of spatial variables.
The coupling of these two systems therefore yields the interesting concept of a system involving both convective and spatial variables but whose left and right invariances are broken by the coupling constraint.

In addition to the rod variables $(\Lambda , \mathbf{r}) \in \mathcal{F} ([0,L], SO(3) \times \mathbb{R}  ^3 )$ considered above, the configuration manifold for the fluid-conveying tube also contains the Lagrangian description of the fluid. It is easier to start by defining the back-to-label map, which is an embedding $ \beta :[ 0,L] \rightarrow \mathbb{R}  $, assigning to a current fluid label particle $s \in [0,L]$ located at $ \mathbf{r} (s) $ in the tube, its Lagrangian label $s_0 \in \mathbb{R}  $. Its inverse $ \varphi := \beta ^{-1} : \beta([0,L]) \subset \mathbb{R}  \rightarrow [0,L]$ gives the current configuration of the fluid in the tube. A time dependent curve of such maps thus describes the fluid motion in the tube, \emph{i.e.}, 
\[
s= \varphi (t,s_0), \quad s \in [0,L].
\]
We now include the motion of the wall of the tube, as a reaction to the fluid motion and pressure. In order to incorporate this effect in the simplest possible case, let us consider the tube radius $R(t,s)$ to be a free variable. In this case, the Lagrangian depends on  $R$, as well as on its time and space derivatives $\dot R$ and $R'$, respectively. If we assume that $R$ can lie on an interval $I_R$, for example, $I_R=\mathbb{R}_+$ (the set of positive numbers), then 
the configuration manifold for the fluid-conveying tube is given by the infinite dimensional manifold
\begin{equation}\label{config_garden_hose} 
\mathcal{Q} :=\mathcal{F} \left( [0,L], SO(3) \times \mathbb{R} ^3 \times I_R  \right) \times \left\{ \varphi : \varphi ^{-1} [0,L]  \rightarrow [0,L]\mid \text{$\varphi $ diffeomorphism} \right\}.
\end{equation}
Note that the domain of definition of the fluid motion $s= \varphi (t,s_0)$ is time dependent, i.e., we have $ \varphi (t):[ a(t), b(t)] \rightarrow [0,L]$, for $\varphi (t, a(t))=0$ and $\varphi (t,b(t))=L$.
The time dependent interval $[a(t),b(t)]$ contains the labels of all the fluid particles that are present in the tube at time $t$.

\subsection{Definition of the Lagrangian}\label{Def_Lagrangian}

Let us now turn our attention to the derivation of the Lagrangian of the fluid-conveying geometrically exact tube. Although we present below a particular expression of the Lagrangian in terms of the variables $(\bom, \bgam,\bOm, \bGam,R, \dot R,S)$, all the equations of motion will be valid for general Lagrangians.

\paragraph{Kinetic energy.} The kinetic energy of the elastic rod is the function $K_{\rm rod}$ given by
\[
K_{\rm rod}= \frac{1}{2} \int_0^L\left( \alpha | \bgam|^2 + a \dot{R}^2 + \mathbb{I}(R) \bom \cdot \bom \right)| \boldsymbol{\Gamma} | \mbox{d}s,
\] 
where $\alpha$ is the linear density of the tube and $\mathbb{I}(R)$ is the  local moment of inertia of the tube.  The term $ \frac{1}{2}a  \dot{R}^2$ describes the kinetic energy of the radial motion of the tube.

We now derive the total kinetic energy of the fluid. In material representation, the total velocity of the fluid particle with label $s_0$ is given by
\begin{equation}\label{velocity_equalities}
\begin{aligned} 
\frac{d}{dt} \mathbf{r}(t ,\varphi (t,s_0))&= \partial _t \mathbf{r} (t, \varphi (t,s_0))+ \partial _s \mathbf{r} (t, \varphi (t,s_0)) \partial _t \varphi (t,s_0)\\
&=\partial _t \mathbf{r} (t, \varphi (t,s_0))+ \partial _s \mathbf{r} (t, \varphi (t,s_0)) u(t,\varphi (t,s_0)),
\end{aligned}
\end{equation}  
where the Eulerian velocity is defined by
\begin{equation}\label{Eulerian_velocity_u}
u(t,s)=\left(  \partial _t \varphi \circ \varphi ^{-1} \right) (t,s), \quad s \in [0,L].
\end{equation}
Therefore, the total kinetic of the fluid reads
\[
K_{\rm fluid}= \frac{1}{2} \int_{ \varphi ^{-1} (0,t)} ^{ \varphi ^{-1} (L,t)}  \xi _0  (s_0) \left| \frac{d}{dt} \mathbf{r}(t ,\varphi (t,s_0))\right | ^2 \mbox{d}s_0,
\]
where the function $ \xi _0(s_0)$ denotes the mass density of the fluid per unit length in the material representation. It is related to the area $Q_0(s_0)$ and the mass density of the fluid per unit volume $ \rho _0(s_0)$ as $ \xi _0= \rho _0 Q_0$.
Using \eqref{velocity_equalities} together with the change of variables $s= \varphi (t,s_0)$, we can rewrite $K_{\rm fluid}$ as
\[
K_{\rm fluid}= \frac{1}{2} \int_{0} ^{ L}  ( \xi _0 \circ \varphi ^{-1} )\partial _s \varphi ^{-1} \left| \boldsymbol{\gamma} + \boldsymbol{\Gamma} u\right | ^2 \mbox{d}s,
\]
where
\begin{equation}\label{xi_advection}
\xi (t,s)= \left[ ( \xi _0  \circ  \varphi ^{-1} )\partial _s \varphi^{-1}  \right] (t,s)
\end{equation}
is the mass density per unit length in the Eulerian description. Note that we have the relation
\[
\xi (t,s)= \rho (t,s) Q(t,s),
\]
where $\rho (t,s)$ is the mass density of the fluid per unit volume, in units Mass/Length$^3$, and $Q(t,s)$ is the area of the tube's cross section, in units Length$^2$.  It is important to note that, while $\xi$, $\xi_0$ are related as in \eqref{xi_advection}, such a relation does not hold for $\rho$, $\rho_0$ and $Q$, $Q_0$, \emph{e.g.}, 
$Q (s,t) \neq  ( Q _0  \circ  \varphi ^{-1} )\partial _s \varphi^{-1} $. That relationship between $Q$ and $Q_0$ is only valid when the fluid inside the tube is incompressible, see \cite{FGBPu2014,FGBPu2015}.

\paragraph{Internal energy.} We assume that the thermal energy of the gas in the Lagrangian is described by the specific energy function $e(\rho,S)$, with $\rho$ being the mass density and $S$ being the  specific entropy. Recall the thermodynamic identities
\begin{equation} 
\mbox{d} e = - p \, \mbox{d} \left( \frac{1}{\rho} \right) + T \mbox{d} S  \quad \Rightarrow \quad p(\rho,S)=\rho^2 \pp{e}{\rho}(\rho , S)\, , \quad 
T(\rho,S) = \pp{e}{S} (\rho , S)\, , 
\label{thermodynamics} 
\end{equation} 
where $p(\rho,S)$ is the pressure and $T(\rho,S)$ is the temperature.  The total internal energy of the fluid is thus
\[
E_{\rm int}= \int_0^L \xi e( \rho , S) \mbox{d}s.
\]

\paragraph{Mass conservation.} We shall assume that the fluid fills the tube completely, and the fluid velocity at each given cross-section is aligned with the axis of the tube.
Since we are assuming a one-dimensional approximation for the fluid motion inside the tube,  the mass density per unit length $ \xi (t,s)$ has to verify
\begin{equation}\label{eq_Q} 
\xi = ( \xi _0 \circ \varphi ^{-1} )\partial _s \varphi ^{-1}.
\end{equation} 
This equation has been already used earlier in \eqref{xi_advection}, from which we deduce the conservation law
\begin{equation} 
\partial _t \xi +\partial _s( \xi u)=0.
\label{xi_cons}
\end{equation} 

The physical meaning of $Q$ can be understood through the fact that $Q \mbox{d} s$ is the volume filled by the fluid in the tube for parameter interval $[s,s+\mbox{d} s]$.
Since $s$ is not necessarily taken to be the arc length, we have $Q=A|\bGam|$, where $A$ is the area of the cross section. In general $Q$ is a given function of the tube's variables, i.e.,
\begin{equation}\label{Q_function} 
Q= Q( R, \boldsymbol{\Omega} , \boldsymbol{\Gamma} ).
\end{equation} 
This expression must be invariant under any re-parameterization $s \rightarrow \widetilde{s}(s)$. In order to explain the origin of this equation, let us approximate the tube's cross-section as an ellipse with semi-axes $a(t,s)$ and $b(t,s)$ at given values of $t$ and $s$. When the centerline  and the orientation of the cross-section of the tube deform,
 then the eccentricity of the cross-section changes depending on the deformation, so $b(t,s)= f(\bOm,\bGam) a(t,s)$, where $f(\bOm,\bGam)$ is a function known from experiments and $a(t,s)$ is the unknown variable.

For a tube with an initially round cross-section made out of uniform material, the cross-sectional area will deform depending on the bending, \emph{i.e.}, $\bOm \times \bchi$, and shear, 
\emph{i.e.}, $\bGam \times  \bchi$, where $\bchi$ is the axis of the tube in reference configuration, usually taken to be $\bchi=\mathbf{E}_1$. For such tubes we have $f=1$ for $\bOm=\mathbf{0}$ and $\bGam=\bchi$.  

For more complex tubes, made out of non-uniform synthetic materials, or commonly encountered in biological applications like arterial flows, all the components of $\bOm$ and $\bGam$ enter into the function $f(\bOm,\bGam)$. Then, the area of the cross-section is given by 
\begin{equation}
A(a,\bOm,\bGam)=\pi a^2 f(\bOm,\bGam) \, . 
\label{Aeq_gen}
\end{equation}
The choice of $A=A(\bOm,\bGam)$ was taken in \cite{FGBPu2015,FGBPu2016,FGBGePu2018}. Such choice prevents the independent dynamics of the tube's wall and states that the cross-sectional area only depends on the deformation of the tube as an elastic rod. For the physical explanation of possible particular expressions of $A(\bOm,\bGam)$ we refer the reader to \cite{FGBGePu2018}. 

The simplest choice we will use for computations is $f \equiv 1$, so that the tube preserves its circular cross-section under deformations. We then take $a=R$ and define 
\begin{equation}\label{modelA}
Q(R,\boldsymbol{\Gamma})=A(R)|\boldsymbol{\Gamma}|=\pi R^2  | \boldsymbol{\Gamma} |.
\end{equation} 
We shall derive all the equations for a general expression \eqref{Aeq_gen} and only use the circular-tube approximation in the computations of the dynamics when the cross-sections at any $s$ is initially circular, the centerline remains straight, and there is no additional twist of the tube. For this model, all cross-sections of the tube remain circular by symmetry.

\paragraph{Elastic energy.} The potential energy due to elastic deformation is a function of $\bOm$, $\bGam$ and $R$. While the equations will be derived for an arbitrary potential energy,  we shall assume the simplest possible quadratic expression for the calculations, namely,
\begin{equation}
E_{\rm rod}=\frac{1}{2}\int_0^L\Big( \mathbb{J}\bOm\! \cdot  \!\bOm
+ \lambda(R) |\bGam- \boldsymbol{\chi} |^2 +2F(R,R',R'')\Big)  |\bGam| \mbox{d} s\,,
\label{e_rod} 
\end{equation} 
where $ \boldsymbol{\chi} \in \mathbb{R} ^3 $ is a fixed vector denoting the axis of the tube in the reference configuration, $ \mathbb{J}$ is a symmetric positive definite $3 \times 3$ matrix, which may depend on $R$, $R'$ and $R''$, and $\lambda (R)$ is the stretching rigidity of the tube. 
The stretching term proportional to $\lambda(R)$ can take the more general form $\mathbb{K} (\bGam-  \boldsymbol{\chi} ) \cdot (\bGam-  \boldsymbol{\chi} )$, where $\mathbb{K}$ is a $3 \times 3$ tensor.
The part of this expression for the elastic energy containing the first two terms in \eqref{e_rod} is commonly used for a Cosserat elastic rod, but more general functions of deformations $\bGam $ are possible. A particular case is a quadratic  function  of $\bGam$ leading to a linear dependence between stresses and strains. We have also introduced the elastic energy of wall $F(R,R',R'')$ which can be explicitly computed for simple elastic tubes. In general $F$ depends on higher derivatives, such as $R''$, however, in this paper, we shall use the simplest possible  approximation for the elastic energy of the wall, corresponding to the long-wavelength approximation in deformations of $R'$, leading to $F$ being only  dependent on $R$ and $R'$. The derivation of the equations in the case when $F$ depends on higher derivatives of $R$ is straightforward, however, obtaining numerical solutions in this case becomes challenging. We shall also note that our geometric approach is valid for an arbitrary dependence of the elastic energy  $E_{\rm rod}(\bOm,\bGam,R)$ on deformations.

\paragraph{Lagrangian.} From all the expressions given above, we obtain the Lagrangian of the fluid-conveying given by
\begin{equation}\label{total_Lagrangian}
\mathsf{L}=\mathsf{L} \big(  \Lambda , \dot {\Lambda }, \mathbf{r} , \dot{\mathbf{r}}, \varphi , \dot{ \varphi },R,\dot{R}\big) :T\mathcal{Q}  \rightarrow \mathbb{R},\quad \mathsf{L}=K_{\rm rod}+K_{\rm fluid}-E_{\rm rod} - E_{\rm int},
\end{equation}
and defined on the tangent bundle $T \mathcal{Q} $ of the configuration space $ \mathcal{Q} $, see \eqref{config_garden_hose}. 
Note that all the arguments of $\mathsf{L}$ are functions of $s$, so we don't need to include the spatial derivatives of $(\Lambda, \mathbf{r}, R)$ explicitly as variables in $\mathsf{L}$. These spatial derivatives appear explicitly in the expression of the integrand of the reduced Lagrangian \eqref{Lagrangian_fluid_tube} below. Assuming there is a uniform external pressure $p_{\rm ext}$ acting on the tube, the Lagrangian expressed 
in terms of the variables $\boldsymbol{\omega} ,\boldsymbol{\gamma} , \boldsymbol{\Omega}, \boldsymbol{\Gamma} ,u,  \xi , S $ reads
\begin{equation}\label{Lagrangian_fluid_tube}
\begin{aligned}
&\ell  ( \boldsymbol{\omega} ,\boldsymbol{\gamma} , \boldsymbol{\Omega} ,  \boldsymbol{\Gamma} ,u, \xi , S, R,\dot{R}) 
\\
&=  \int_0^L \Big[\Big(\frac{1}{2} \alpha | \bgam|^2 +  \frac{1}{2}\mathbb{I}(R) \bom\! \cdot\! \bom + \frac{1}{2} a  \dot R^2 - F(R,R') - \frac{1}{2} \mathbb{J}  \bOm \!\cdot \! \bOm\\
& \qquad \qquad  - \frac{1}{2} \lambda(R) |\bGam- \boldsymbol{\chi} |^2 \Big) | \bGam|
+ \frac{1}{2} \xi  \left| \boldsymbol{\gamma} + \boldsymbol{\Gamma} u\right | ^2
- \xi  e(\rho,S)  -p_{\rm ext} Q\Big] \mbox{d} s 
\\ 
&=: 
\int_0^L \Big[\ell_0( \boldsymbol{\omega} ,\boldsymbol{\gamma} , \boldsymbol{\Omega} ,  \boldsymbol{\Gamma} ,u, \xi ,R,\dot{R}, R') 
-\xi e(\rho,S) -p_{\rm ext} Q \Big]  \mbox{d} s \, , 
\end{aligned}
\end{equation}
where $ \rho $, in the term $\xi e(\rho,S)$ in the above formula, is defined in terms of the independent variables $ \xi ,\boldsymbol{\Omega} ,\boldsymbol{\Gamma} , R$ as
\begin{equation}\label{relation_xi_rho_Q} 
\rho := \frac{\xi }{Q( \boldsymbol{\Omega} , \boldsymbol{\Gamma} , R)}.
\end{equation} 
For convenience in further calculations, we also define the function
\begin{equation} {\fontsize{11pt}{9pt}\selectfont
f_0:= \frac{1}{2}\Big( \alpha | \bgam|^2 +  \mathbb{I}(R) \bom\! \cdot\! \bom +  a  \dot R^2 -2 F(R,R') -  \mathbb{J}  \bOm \!\cdot \! \bOm
-\lambda(R) |\bGam- \boldsymbol{\chi} |^2 \Big) \,.
\label{f0_def} }
\end{equation}
In Section \ref{Geometric_approach} we shall show that $\ell$ arises from the Lagrangian $\mathsf{L}:T \mathcal{Q} \rightarrow \mathbb{R}  $ by a reduction process by symmetry. We shall thus refer to $\ell$ as the symmetry-reduced Lagrangian. We have also denoted $\ell_0$ to be the part of the integrand of the Lagrangian related to just the tube dynamics, without the incorporation of the internal energy.
We shall perform all the derivations for an arbitrary Lagrangian function $\ell( \boldsymbol{\omega} ,\boldsymbol{\gamma} , \boldsymbol{\Omega} ,  \boldsymbol{\Gamma} ,u, \xi , S, R,\dot{R}) $ and shall study later particular solutions for the expression \eqref{Lagrangian_fluid_tube}. 

\subsection{Variational principle and equations of motion}

The equations of motion are obtained from the Hamilton principle applied to the Lagrangian \eqref{total_Lagrangian}, namely
\begin{equation}\label{HP_total} 
\delta \int_0^T\mathsf{L}( \Lambda , \dot {\Lambda }, \mathbf{r} , \dot{\mathbf{r}}, \varphi, \dot\varphi, R, \dot R)\mbox{d}t=0, 
\end{equation} 
for arbitrary variations $\delta\Lambda, \delta\mathbf{r}, \delta\varphi, \delta R$ vanishing at $t=0,T$. In terms of the symmetry reduced Lagrangian $\ell$, this variational principle becomes
\begin{equation} 
\de  \int_0^T\ell ( \boldsymbol{\omega} ,\boldsymbol{\gamma} , \boldsymbol{\Omega} ,  \boldsymbol{\Gamma} ,u, \xi , S, R,\dot{R}) \mbox{d} t =0 
\, , 
\label{min_action_gas} 
\end{equation} 
for variations \eqref{delta1}, \eqref{delta2}, and $\de u$, $\de \xi$ and $\de S$ computed as
\begin{align} 
\de u & =  \partial_t\eta + u \partial_s\eta - \eta \partial _su\label{delta_u}\\
\delta \xi &=- \partial _s ( \xi \eta )\label{delta_xi}\\
\delta S&= -\eta \partial _s S,\label{delta_S}
\end{align} 
where $\eta = \de \varphi \circ \varphi^{-1}$. Note that $\eta(t,s)$ is an arbitrary function vanishing at $t=0,T$. A lengthy computation yields the system 
\begin{equation}\label{system_ell} 
\left\lbrace\begin{array}{l}
\displaystyle\vspace{0.2cm}\frac{D}{Dt} \dede{\ell}{\bom}+\bgam\times\dede{\ell}{\bgam} +\frac{D}{Ds}  \dede{\ell}{\bOm}   +\bGam\times \dede{\ell}{\bGam}=0\\
\displaystyle\vspace{0.2cm}\frac{D}{Dt} \dede{\ell}{\bgam} +\frac{D}{Ds}  \dede{\ell}{\bGam} =0\\
\displaystyle\vspace{0.2cm}\prt_t\frac{\delta \ell}{\delta u}  + u \partial _s\frac{\delta \ell}{\delta u}+ 2  \frac{\delta \ell}{\delta u} \partial _s u = \xi \partial _s \frac{\delta \ell}{\delta \xi }- \frac{\delta \ell}{\delta S} \partial _s S   \\
\displaystyle\vspace{0.2cm} 
\partial_t \frac{\delta \ell }{\delta \dot R} - \frac{\delta \ell}{\delta R} =0 
\\
\displaystyle\vspace{0.2cm} \partial _t \boldsymbol{\Omega} = \boldsymbol{\Omega} \times \boldsymbol{\omega} +\partial _s  \boldsymbol{\omega}, \qquad  \partial _t \boldsymbol{\Gamma} + \boldsymbol{\omega} \times \boldsymbol{\Gamma} = \partial _s \boldsymbol{\gamma} + \boldsymbol{\Omega} \times \boldsymbol{\gamma}\\
\displaystyle \partial _t  \xi + \partial _s ( \xi u)=0, \qquad\partial _tS+ u \partial _s S=0,
\end{array}\right.
\end{equation}
where the symbols $  {\delta \ell}/{\delta \boldsymbol{\omega} },  {\delta \ell}/{\delta \boldsymbol{\Gamma} }, ... $ denote the functional derivatives of $\ell$ relative to the $L ^2 $ pairing, and we introduced the notations
\[
\frac{D}{Dt} =\partial_t + \boldsymbol{\omega}\times\quad\text{and}\quad\frac{D}{Ds} =\partial_s + \boldsymbol{\Omega}\times.
\]
Note that the first equation arises from the terms proportional to $\bsigma$ in the variation of the action functional and thus describes the conservation of angular momentum. The second equation arises from the terms proportional to $\boldsymbol{\psi}$ and describes the conservation of linear momentum.
The third equation is obtained by collecting the terms proportional to $\eta$ and describes the conservation of fluid momentum.  The fourth equation comes from collecting the terms proportional to $\de R$ and describes the elastic deformation of the walls due to the pressure. Finally, the last four equations arise from the four definitions $\bOm=\Lambda^{-1} \Lambda'$, $\bGam=\Lambda^{-1} \mathbf{r} '$, $\xi =  ( \xi _0 \circ \varphi ^{-1} )\partial _s \varphi ^{-1}$, and $S= S_0 \circ \varphi ^{-1}$.

For the choice
\[
\ell( \boldsymbol{\omega} ,\boldsymbol{\gamma} , \boldsymbol{\Omega} ,  \boldsymbol{\Gamma} ,u, \xi , S, R,\dot{R})=
\int_0^L \Big[\ell_0( \boldsymbol{\omega} ,\boldsymbol{\gamma} , \boldsymbol{\Omega} ,  \boldsymbol{\Gamma} ,u, \xi ,R,\dot{R}, R') 
-\xi e(\rho,S) - p_{\rm ext}Q \Big] \mbox{d} s \, ,
\]
the functional derivatives are computed as
\begin{equation}\label{functional_derivatives}
\begin{aligned}
\frac{\delta \ell}{\delta \boldsymbol{\Omega} }&= \frac{\partial  \ell_0}{\partial \boldsymbol{\Omega} } +  (p- p_{\rm ext}) \frac{\partial Q}{\partial \boldsymbol{\Omega} }\\
\frac{\delta \ell}{\delta \boldsymbol{\Gamma } }&= \frac{\partial \ell_0}{\partial \boldsymbol{\Gamma } } +  (p- p_{\rm ext}) \frac{\partial Q}{\partial \boldsymbol{\Gamma } }\\
\frac{\delta \ell}{\delta R }&= \frac{\partial \ell_0}{\partial R } - \partial _s\frac{\partial \ell_0}{\partial R' }+ \partial _ s^2 \frac{\partial \ell_0}{\partial R'' } +  (p- p_{\rm ext}) \frac{\partial Q}{\partial R }\\
\frac{\delta \ell}{\delta \xi }&= \frac{\partial \ell_0}{\partial \xi  } - e -\rho  \frac{\partial e}{\partial \rho }  ,
\end{aligned}
\end{equation}
where $p( \rho , S)= \rho ^2 \frac{\partial e}{\partial \rho }(\rho , S)$ is the pressure and $\partial \ell_0/\partial \boldsymbol{\Omega}, \partial \ell_0/\partial \boldsymbol{\Gamma} $, ... denote the ordinary partial derivatives of $\ell_0$,  whose explicit form can be directly computed from the expression of $\ell_0$ in \eqref{Lagrangian_fluid_tube}.

\begin{theorem}\label{vp_tube_fluid}  For the Lagrangian $\ell$ in \eqref{Lagrangian_fluid_tube}, the variational principle \eqref{min_action_gas} with constrained variations \eqref{delta1}, \eqref{delta2}, \eqref{delta_u}, \eqref{delta_xi}, \eqref{delta_S} yields the equations of motion
\begin{equation}\label{full_3D}{\fontsize{11pt}{9pt}\selectfont
\!\!\!\!\left\lbrace\!\begin{array}{l}
\displaystyle\vspace{0.2cm}\frac{D}{Dt} \pp{\ell_0}{\bom}+\bgam\times\pp{\ell_0}{\bgam} +\frac{D}{Ds}  \left( \pp{\ell_0}{\bOm} + (p-p_{\rm ext})\pp{Q}{\bOm} \right)   +\bGam\times 
 \left( 
\pp{\ell_0}{\bGam} + (p-p_{\rm ext})\pp{Q}{\bGam} 
\right) =0
\\
\displaystyle\vspace{0.2cm}\frac{D}{Dt} \pp{\ell_0}{\bgam} +\frac{D}{Ds} \left( \pp{\ell_0}{\bGam}+(p-p_{\rm ext}) \pp{Q}{\bGam}   \right)=0\\
\displaystyle\vspace{0.2cm}\prt_t\frac{ \partial  \ell_0}{ \partial  u}  + u \partial _s\frac{ \partial  \ell_0}{ \partial  u}+ 2  \frac{ \partial  \ell_0}{ \partial  u} \partial _s u = \xi \partial _s  \pp{\ell_0}{\xi}  -Q \partial_s p \\
\displaystyle\vspace{0.2cm} 
\partial_t \pp{\ell_0}{\dot R} - \partial^2_s \pp{\ell_0}{ R''} +  \partial_s \pp{\ell_0}{ R'} -  \pp{\ell_0}{ R}- (p-p_{\rm ext})\frac{\partial Q}{\partial R}=0 
\\
\displaystyle\vspace{0.2cm} \partial _t \boldsymbol{\Omega} = \boldsymbol{\Omega} \times \boldsymbol{\omega} +\partial _s  \boldsymbol{\omega}, \qquad  \partial _t \boldsymbol{\Gamma} + \boldsymbol{\omega} \times \boldsymbol{\Gamma} = \partial _s \boldsymbol{\gamma} + \boldsymbol{\Omega} \times \boldsymbol{\gamma}\\
\displaystyle \partial _t \xi + \partial _s (\xi u)=0, \qquad\partial _tS+ u \partial _s S=0
\end{array}\right.}
\end{equation}
together with appropriate boundary conditions enforcing vanishing of the variations of the boundary terms. 
\rem{ 
\begin{equation}\label{BC} 
\left. \dede{\ell}{\bOm} - \mu \pp{Q}{\bOm} \right|_{s=0,L} =0 \, , 
\quad 
\left. \dede{\ell}{\bGam} - \mu \pp{Q}{\bGam} \right|_{s=0,L}=0 \, , \quad 
\left. \dede{\ell}{u}u - \mu Q \right|_{s=0,L}=0
\end{equation} 
hold. If one of the extremity of the rod is fixed, say $s=0$, then the first two boundary conditions only arise at $s=L$. If the fluid velocity is prescribed at one extremity of the rod, say $s=0$, then the last boundary condition only arise at $s=L$.
} 
\end{theorem}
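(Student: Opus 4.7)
The plan is to compute $\delta\int_0^T\ell\,dt$ by the chain rule, substitute the constrained variations (\ref{delta1})--(\ref{delta_S}) together with $\delta\dot R=\partial_t\delta R$, integrate by parts in $t$ and $s$, and apply the fundamental lemma of the calculus of variations to the independent arbitrary test functions $\bsigma$, $\bpsi$, $\eta$, $\delta R$ (all vanishing at $t=0,T$). This produces the four dynamical equations in (\ref{full_3D}); the remaining four equations are the compatibility/advection relations coming directly from the definitions (\ref{def_conv_var}), (\ref{eq_Q}), and $S=S_0\circ\varphi^{-1}$.

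\textbf{Step 1: general reduced system (\ref{system_ell}).} First I would establish the intermediate form (\ref{system_ell}) for an arbitrary $\ell(\bom,\bgam,\bOm,\bGam,u,\xi,S,R,\dot R)$. Expanding $\delta\ell$ and substituting (\ref{delta1})--(\ref{delta_S}), one integrates by parts using $\mathbf{a}\cdot(\mathbf{b}\times\mathbf{c})=\mathbf{c}\cdot(\mathbf{a}\times\mathbf{b})$ to transfer $\partial_t$ and $\partial_s$ off $\bsigma$ and $\bpsi$. Collecting the coefficient of $\bsigma$ yields the angular-momentum equation; of $\bpsi$, the linear-momentum equation; of $\delta R$, the wall equation (after integrating by parts the $R'$ and $R''$ terms). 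The coefficient of $\eta$ assembles from three pieces: $\dede{\ell}{u}(\partial_t\eta+u\partial_s\eta-\eta\partial_s u)$ integrates by parts to $-\partial_t\dede{\ell}{u}-\partial_s(u\dede{\ell}{u})-\dede{\ell}{u}\partial_s u=-\partial_t\dede{\ell}{u}-u\partial_s\dede{\ell}{u}-2\dede{\ell}{u}\partial_s u$; the term $-\dede{\ell}{\xi}\partial_s(\xi\eta)$ integrates to $\xi\partial_s\dede{\ell}{\xi}\,\eta$; and $-\dede{\ell}{S}\eta\partial_s S$ is left untouched. Together they give the third line of (\ref{system_ell}).

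\textbf{Step 2: explicit functional derivatives.} To pass from (\ref{system_ell}) to (\ref{full_3D}), I would compute $\dede{\ell}{\cdot}$ for (\ref{Lagrangian_fluid_tube}) using $\rho=\xi/Q(\bOm,\bGam,R)$ together with $p=\rho^2\pp{e}{\rho}$. The chain rule gives
\[
\pp{(\xi e)}{\bOm}=\xi\pp{e}{\rho}\cdot\Big({-}\frac{\xi}{Q^2}\Big)\pp{Q}{\bOm}=-p\,\pp{Q}{\bOm},
\]
and similarly for $\bGam$ and $R$; combined with the $-p_{\rm ext}Q$ contribution this produces the $(p-p_{\rm ext})\pp{Q}{\cdot}$ terms in (\ref{functional_derivatives}). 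Moreover $\pp{(\xi e)}{\xi}=e+\rho\pp{e}{\rho}$ gives $\dede{\ell}{\xi}=\pp{\ell_0}{\xi}-e-\rho\pp{e}{\rho}$, and $\dede{\ell}{S}=-\xi T$. Substituting into (\ref{system_ell}) yields the first, second and fourth equations of (\ref{full_3D}); for the third equation, the Gibbs relation $de=-p\,d(1/\rho)+T\,dS$ gives $d(e+\rho\pp{e}{\rho})=T\,dS+\rho^{-1}dp$, so that
\[
\xi\partial_s\dede{\ell}{\xi}-\dede{\ell}{S}\partial_s S=\xi\partial_s\pp{\ell_0}{\xi}-\xi\big(T\partial_s S+\rho^{-1}\partial_s p\big)+\xi T\partial_s S=\xi\partial_s\pp{\ell_0}{\xi}-Q\partial_s p,
\]
since $\xi/\rho=Q$. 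This is the right-hand side of the third equation of (\ref{full_3D}).

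\textbf{Main obstacle.} The most delicate calculus-of-variations step is the $\delta u$ integration by parts: the expression $\delta u=\partial_t\eta+u\partial_s\eta-\eta\partial_s u$ mixes time and space derivatives of two unknowns and produces the characteristic factor $2$ on $\dede{\ell}{u}\partial_s u$, which is easy to miscount. A secondary subtlety is the collapse $\xi\partial_s(-e-\rho\pp{e}{\rho})+\xi T\partial_s S=-Q\partial_s p$, which requires the Gibbs relation and identifies the resulting combination with the Eulerian pressure-gradient force on the fluid column. Boundary terms in $s$, suppressed in the statement, arise from every integration by parts in $s$ and yield the appropriate natural boundary conditions, generalizing (\ref{BC_rod}) to include $u$ and $R$.
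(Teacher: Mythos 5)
Your proposal is correct and follows essentially the same route as the paper: the paper's own proof consists precisely of substituting the functional derivatives \eqref{functional_derivatives} into the intermediate system \eqref{system_ell} and invoking the thermodynamic identities \eqref{thermodynamics}, which is your Step 2, while your Step 1 merely fills in the ``lengthy computation'' that the paper asserts when stating \eqref{system_ell}. Your integration-by-parts bookkeeping for $\delta u$ (yielding the factor $2$) and the enthalpy identity $\partial_s\big(e+\rho\,\partial e/\partial\rho\big)=T\partial_sS+\rho^{-1}\partial_sp$ leading to $-Q\partial_sp$ both check out.
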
 
\begin{proof}  The system is obtained by replacing the expression of the functional derivatives \eqref{functional_derivatives} in the system \eqref{system_ell} and  using the thermodynamics identities \eqref{thermodynamics}.\end{proof} 

\medskip

Equations \eqref{full_3D}  have to be solved as a system of nonlinear partial differential equations, since all equations are coupled. This can be seen, for example, from computing the derivative 
\begin{equation} 
\frac{ \partial \ell _0 }{ \partial  u} = \rho A | \boldsymbol{\Gamma} |  \big( \bgam + \bGam u \big)\cdot \bGam\, ,
\label{dldu}
\end{equation}
which appears in the the balance of fluid momentum, \emph{i.e.}, the third equation in \eqref{full_3D}.

\medskip

With the fluid momentum defined as
\[
m:= \frac{1}{\rho Q}\dede{\ell_0}{u} = \bGam \cdot \left( \bgam+ u \bGam \right)
\]
the third equation in \eqref{full_3D} can be simply written as
\begin{equation}
\label{meq}
\partial _t m+  \partial _s \left(mu  -  \pp{\ell_0}{\xi} \right)=-\frac{1}{\rho} \partial_s p \, , \quad  \, . 
\end{equation}
which is strongly reminiscent of the 1D gas dynamics. 
Also, in \eqref{meq} we have
\begin{equation} 
\label{dell_dxi} 
\frac{\partial  \ell_0}{ \partial  \xi }= \frac{1}{2}  \left| \boldsymbol{\gamma} + \boldsymbol{\Gamma} u\right | ^2
\end{equation} 
with the physical meaning of total velocity of the fluid particle, squared.

\medskip

For $Q(R,\bGam)=A(R) |\bGam|$, the equations \eqref{full_3D} can be further simplified since 
\[ 
\pp{Q}{\bGam} = A(R) \frac{\bGam}{|\bGam|} =Q  \frac{\bGam}{|\bGam|^2} \quad\text{and}\quad   \bGam \times \pp{Q}{\bGam}=0  \, . 
\]
Notice also that for $A(R)= \pi R ^2 $, we have $\mbox{d}A/\mbox{d}R= 2 \pi R$, the circumference of a circle.

\begin{remark}[On further equation simplification] \rm 
We would like to emphasize that any simplification of the model must be done at the level of the Lagrangian $\ell$, for example, by using quadratic elastic energy of certain type,  simplified expressions for moments of inertia, \emph{etc.}  However, once the Lagrangian $\ell$ is chosen, the equations \eqref{full_3D} and their corresponding reductions for two dimensions are completely determined. No further approximations at the level of the equations are possible without losing the exact Lagrangian structure of the system. Such approximations are almost certain to introduce additional, uncontrolled energy and momentum sources and sinks, and we will avoid using them in this paper.  We are in the process of developing a variational integrator for this problem, however, we expect that such a development will be quite intricate. Symplectic and multisymplectic variational integrators for a geometrically exact rod without fluid motion have been only derived recently, see \cite{DeFGBKoRa2014} and \cite{DeGBBLeObRaWe2014}, and the fluid motion presents the substantial difficulty of introducing right-invariant terms in the problem. Multisymplectic discretization for tube conveying incompressible fluid has been considered in \cite{FGBPu2016}.
\end{remark} 

\rem{ 
For the Lagrangian \eqref{Lagrangian_fluid_tube} and with $A$ given by \eqref{modelA}, we have
\begin{align*} 
\frac{\delta \ell}{\delta \boldsymbol{\Omega } }- \mu \frac{\partial Q}{\partial \boldsymbol{\Omega } } &= -KB \boldsymbol{\Omega} | \boldsymbol{\Gamma} |- \mathbb{J}  \boldsymbol{\Omega} | \boldsymbol{\Gamma} |, \quad B:= \frac{1}{2} \rho | \boldsymbol{\gamma} + \boldsymbol{\Gamma} u| ^2 - \mu \\
\frac{\delta \ell}{\delta \boldsymbol{\Gamma} }- \mu \frac{\partial Q}{\partial \boldsymbol{\Gamma} } &=  \rho Au( \boldsymbol{\gamma} + \boldsymbol{\Gamma} u)| \boldsymbol{\Gamma} | - \lambda ( \boldsymbol{\Gamma} - \boldsymbol{\chi} )| \boldsymbol{\Gamma} | + (f- A \mu ) \frac{\boldsymbol{\Gamma} }{| \boldsymbol{\Gamma} |} \\
\dede{\ell}{u}u - \mu Q &=Q( mu- \mu ),\quad m:= \frac{1}{Q}\frac{\delta \ell}{\delta u}= \rho \bGam \cdot \big( \bgam + \bGam u \big),
\end{align*}
where $f$ denotes the integrand function of the Lagrangian, namely $\ell ( \omega  , \boldsymbol{\gamma} ,\Omega , \boldsymbol{\Gamma} ,u)= \int_0^L f( \omega  , \boldsymbol{\gamma} ,\Omega , \boldsymbol{\Gamma} ,u) |\bGam| \mbox{d} s$. 
}

\subsection{Conservation laws for gas motion and Rankine-Hugoniot conditions} 
Since we are concerned with the flow of compressible fluids, it is natural to ask about the existence of shock waves and the conditions the shock solutions must satisfy at the discontinuity. In the one-dimensional motion of a compressible fluid, the constraints on jumps of quantities across the shock are known as the Rankine-Hugoniot conditions. Let us for shortness denote by $[a]$ the jump of the quantity $a$ across the shock, and $c$ the velocity of the shock. The classical Rankine-Hugoniot conditions for the one-dimensional motion of a compressible fluid 
gives the continuity of the corresponding quantities as
\begin{align} 
c [\rho] &= [ \rho u ] \quad \mbox{(mass)} \, , 
\label{RH_mass_0}
\\ 
c [\rho u ] & = [ \rho u^2 + p ] \quad \mbox{(momentum)} \, , 
\label{RH_momentum_0}
\\ 
c[E] & = \left[ \left( \frac{1}{2} \rho u^2 +  \rho e + p \right) u \right] \, ,   \quad E=\frac{1}{2} \rho u^2 + \rho e 
 \quad \mbox{(energy)} \, ,
\label{RH_energy_0} 
\end{align} 
see, e.g., \cite{Wi1974}, where we have defined $E$ to be the total energy density of the gas. It is useful and, in our opinion, rather non-trivial, to derive the corresponding conditions for the gas moving in a tube that can freely deform in the three dimensional space and with expandable cross-section. As far as we know, no such conditions have been derived before, and the derivation of these equations shows the full prowess of the geometric methods. The main difficulty comes in the derivation of the analogue of the energy equation \eqref{RH_energy_0}, as the analogues of equations \eqref{RH_mass_0} and \eqref{RH_momentum_0} are rather straightforward. 

\rem{ 
To get the analogue of \eqref{RH_momentum_0}, we use \eqref{meq} and the equalities
\[
-\frac{1}{\rho} \partial_s p = - \partial _s \left( \rho \frac{\partial e}{\partial \rho } \right) - \frac{\partial e}{\partial \rho } \partial _s\rho = - \partial _s \left(\rho\frac{\partial e}{\partial \rho }+ e\right) +\frac{\partial e}{\partial S}\partial _s S=- \partial _s \left( \frac{p}{\rho } + e\right) +\frac{\partial e}{\partial S}\partial _s S.
\]
For a constant entropy, \ldots 
} 

The mass conservation \eqref{xi_cons} is already written in a conservation law form. 
We rewrite the balance of fluid momentum in the following form
\begin{equation} 
\partial _t \big( \xi \bGam \cdot \left( \bgam+ u \bGam \right) \big)+\partial _s \big( u\xi \bGam \cdot \left( \bgam+ u \bGam \right) +pQ\big)
 -\xi ( \boldsymbol{\gamma} +u\boldsymbol{\Gamma} )\! \cdot \!( \partial _s\boldsymbol{\gamma} + u \partial _s \boldsymbol{\Gamma} )= p \partial _s Q.
\label{gas_momentum_cons} 
\end{equation} 
Note that we cannot use \eqref{meq} to compute the Rankine-Hugoniot conditions, even though the form of the equations is similar. This is due to physical requirement that the conservation laws across the shock must include the conservation of mass, fluid momentum and energy. While many other conservation laws are formally possible, only these three conservation laws make physical sense, see \cite{Wi1974}.

The derivation of the corresponding energy equation is rather tedious and we will only sketch it, presenting the final solution. We define the total energy $\mathbb{E}$, including the thermal and mechanical terms, and the energy density $E$ as 
\begin{equation}
\mathbb{E}=\int_0^L \!E \mbox{d} s \, , \qquad E:= \xi e + \dot R \pp{\ell_0}{\dot R} + \bom \cdot \pp{\ell_0}{\bom} + \bgam \cdot \pp{\ell_0}{\bgam} 
+ u \pp{\ell_0}{u} - \ell_0 \, .
\label{E_tot_def}
\end{equation} 
Then, performing appropriate substitution for time derivatives of the terms in \eqref{E_tot_def} using equations of motion \eqref{full_3D}, we obtain the conservation laws for the energy density $E$ as 
\[
\partial_t E+ \partial_s J =0
\]
for the energy flux $J$ given by
\begin{equation} 
\quad J  := \bom \cdot \pp{\ell_0}{\bOm} + \bgam \cdot \pp{\ell_0}{\bGam} + \dot R \pp{\ell_0}{R'} + u^2 \pp{\ell_0}{u} - \xi u \pp{\ell_0}{\xi} + p \bgam \cdot \pp{Q}{\bGam} + \left( \frac{p}{\rho}+e \right) \xi u .
\label{cons_energy} 
\end{equation} 
Notice an interesting symmetry between time derivatives and spatial derivatives in the expression for the energy flux $J$ in \eqref{cons_energy}. 
Taking only the jumps at the discontinuous terms, we arrive to the following conservation laws for the shock wave moving with velocity $c$: 
\begin{align} 
& c [\rho Q ] = [ \rho Q u ] 
\label{RH_mass}
\\ 
&c \left[ \xi \bGam \cdot \left( \bgam+ u \bGam \right) \right] 
= \left[  \xi u \bGam \cdot (\bgam + u \bGam ) + Q p  \right] 
\label{RH_momentum} 
\\ 
& c \left[ \xi \left( e + \frac{1}{2} \left| \bgam + \bGam u \right|^2 \right) \right] = 
\left[ \frac{1}{2} \xi u \left| \bgam + \bGam u \right|^2+ \frac{p \xi}{\rho |\bGam|^2 } \bGam \cdot (\bgam+ \bGam u) + \xi u e    \right] .
\label{RH_energy} 
\end{align} 
We can further simplify these equations by noticing that at the jump, the continuity of the tube is preserved (or even ${\mathcal C}^1$) so $\bgam$, $\bGam$ and $Q$ are continuous. Therefore, remembering that $\xi=\rho Q$, Rankine-Hugoniot conditions above simplify to 
\begin{align} 
& c [\rho ] = [ \rho u ] 
\label{RH_mass}
\\ 
&c \left[ \rho \right]  \bGam \cdot  \bgam+ c\left[ \rho u \right] | \bGam |^2  
= \left[  \rho u \right] \bGam \cdot \bgam + \left[ \rho u^2 \right] | \bGam |^2  + [ p]  
\label{RH_momentum} 
\\ 
& c \left[ \rho \left( e + \frac{1}{2} \left| \bgam + \bGam u \right|^2 \right) \right] = 
\left[ \frac{1}{2} \rho u \left| \bgam + \bGam u \right|^2+ \frac{p }{ |\bGam|^2 } \bGam \cdot (\bgam+ \bGam u) + \rho u e    \right] .
\label{RH_energy} 
\end{align}

For comparison with the classic Rankine-Hugoniot condition we set the tube to be circular, so $R'=0$ and $\dot R=0$, and not moving and straight, so $\bgam=\mathbf{0}$, $\bom=\mathbf{0}$, $\bOm=\mathbf{0}$ $\bGam=\mathbf{E}_1$, hence and $Q=A_0$. Then, the mass conservation law \eqref{RH_mass} reduces to \eqref{RH_mass_0}, \eqref{RH_momentum} reduces to \eqref{RH_momentum_0} and  \eqref{RH_energy} to \eqref{RH_energy_0}. We note that the extensions of the Rankine-Hugoniot conditions we have derived here are valid for all configurations of the tube in our framework, and they account for motion of the fluid, tube's motion in space and its deformations, and also the expansion/contraction of tube's cross-section coming from the dynamics of the tube's radius $R(t,s)$. 

\subsection{On the Rankine-Hugoniot conditions for traveling waves} 
{\rm 
Later in this paper, we will study the nonlinear traveling waves, which are solutions that depend on $s$ and $t$ through the combination $x=s-ct$, where $c$ is a constant determined by the dynamics. 
When $\Lambda(t,s)=\Lambda(s-ct)$ and $\mathbf{r}(t,s)=\mathbf{r} (s-ct)$, then by definition we have 
$\bom  = - c \bOm $ and $\bgam =-c (\bGam-\mathbf{E}_1) $. Equation \eqref{RH_mass} is unchanged, while \eqref{RH_momentum} and \eqref{RH_energy}  simplify to give  
\begin{equation} 
\begin{aligned} 
& \left[ \rho (u-c)^2  | \bGam |^2  + \rho(u-c) c \bGam \cdot \mathbf{E}_1 + p \right]  =0 
\\ 
& \big[  \rho e  (u-c)  + \frac{1}{2}  \rho (u-c) \left( c^2 + 2 (u-c) c \bGam \cdot \mathbf{E}_1 + (u-c)^2 |\bGam|^2  \right)   \\ 
& \qquad \qquad \qquad \qquad +
\frac{p}{|\bGam|^2} \left( |\bGam|^2 (u-c) + c \bGam \cdot \mathbf{E}_1  \right) \big] = 0 .
\label{RH_moving} 
\end{aligned} 
\end{equation} 
Taking $\bGam=\mathbf{E}_1$ recovers the standard Rankine-Hugoniot conditions for momentum and energy balance across the shock \eqref{RH_momentum_0} and \eqref{RH_energy_0}.

For a perfect gas, we have $p= \frac{R}{C_v} \rho e = (\gamma-1) \rho e$, where $\gamma=C_p/C_v$, therefore the pressure can be eliminated from  equations \eqref{RH_moving}  leading to 
\begin{equation} 
\begin{aligned} 
& \left[ \rho (u-c)^2  | \bGam |^2  + \rho(u-c) c \bGam \cdot \mathbf{E}_1 +  (\gamma-1) \rho e \right]  =0 
\\ 
& \big[  \rho e  (u-c)  + \frac{1}{2}  \rho (u-c) \left( c^2 + 2 (u-c) c \bGam \cdot \mathbf{E}_1 + (u-c)^2 |\bGam|^2  \right)\\ 
& \qquad \qquad \qquad \qquad +
(\gamma-1) \frac{\rho e}{|\bGam|^2} \left( |\bGam|^2 (u-c) + c \bGam \cdot \mathbf{E}_1  \right) \big] = 0 .
\label{RH_moving_perfect} 
\end{aligned} 
\end{equation} 
 We shall use  equations \eqref{RH_moving} later in \S\ref{sec:traveling_waves} for computations of shock waves in the fluid part for traveling waves. 
}

\subsection{Incompressible fluids}\label{sec_incomp}

The incompressibility of the fluid motion is imposed by requiring that the mass density per unit volume is a constant number:
\[
\rho (t,s)=\rho _0
\]
Given the expression \eqref{Q_function} of the area in terms of the tube's variables, the relation \eqref{relation_xi_rho_Q}  still holds with $ \rho =\rho _0$. It thus induces a holonomic constraint in the configuration space $\mathcal{Q} $, namely
\begin{equation}\label{NH_constraint} 
Q( \boldsymbol{\Omega} , \boldsymbol{\Gamma} , R)=\frac{\xi}{\rho_0 }.
\end{equation} 
Recall that $ \xi = (\xi _0 \circ \varphi^{-1} ) \partial _s \varphi^{-1} $, so \eqref{NH_constraint} can be written as
\[
Q( \boldsymbol{\Omega} , \boldsymbol{\Gamma} , R) = (Q _0 \circ \varphi^{-1} ) \partial _s \varphi^{-1},
\] 
where $Q_0 = {\xi _0}/{ \rho _0}$. This is the form of the holonomic constraint that is imposed in \cite{FGBPu2014,FGBPu2015}, leading to the Lagrange multiplier $ \mu $. In \cite{FGBPu2014,FGBPu2015} it was however assumed that the shape of the cross-section, and hence the variable $Q$, is uniquely determined by the variables $\bOm$ and $\bGam$, corresponding to $R$ being constant in our framework.
Note that this holonomic constraint implies that $ Q$, similarly with $ \xi =\rho _0Q$ is advected by the fluid motion, namely, it verifies
\[
\partial _t Q+\partial _s (Qu)=0,
\]
which is not true in the compressible case.
With a Lagrangian of the form 
\[
\ell( \boldsymbol{\omega} ,\boldsymbol{\gamma} , \boldsymbol{\Omega} ,  \boldsymbol{\Gamma} ,u, R,\dot{R})= 
\int_0^L \ell_0( \boldsymbol{\omega} ,\boldsymbol{\gamma} , \boldsymbol{\Omega} ,  \boldsymbol{\Gamma} ,u ,R,\dot{R}, R')\, \mbox{d} s \, ,
\]
the variational principle
\[
\delta \int_0^T \left[ \ell( \boldsymbol{\omega} , \boldsymbol{\gamma} ,\boldsymbol{\Omega} , \boldsymbol{\Gamma} , u, R, \dot R )+ \int_0^L \mu  \Big( Q(\boldsymbol{\Omega}, \boldsymbol{\Gamma} , R)- (Q_0 \circ \varphi ^{-1} ) \partial _s \varphi^{-1} \Big) \mbox{d}s\right]\mbox{d}t=0
\]
with respect to variations  \eqref{delta1}, \eqref{delta2}, 
$\delta \varphi$, and $\delta R$, yields the system
\begin{equation}\label{full_3D_incompressible} 
\left\lbrace\begin{array}{l}
\displaystyle\vspace{0.2cm}\frac{D}{Dt}\pp{\ell_0}{\bom}+\bgam\times\pp{\ell_0}{\bgam} +\frac{D}{Ds}\left(  \pp{\ell_0}{\bOm} +\mu  \pp{Q}{\bOm} \right)+\bGam\times 
 \left( 
\pp{\ell_0}{\bGam} + \mu  \pp{Q}{\bGam} 
\right) =0
\\
\displaystyle\vspace{0.2cm}\frac{D}{Dt}\pp{\ell_0}{\bgam} + \frac{D}{Ds}\left( \pp{\ell_0}{\bGam} +  \mu  \pp{Q}{\bGam}   \right)=0\\
\displaystyle\vspace{0.2cm}\prt_t\frac{ \partial  \ell_0}{ \partial  u}  + u \partial _s\frac{ \partial  \ell_0}{ \partial  u}+ 2  \frac{ \partial  \ell_0}{ \partial  u} \partial _s u =  -Q \partial_s\mu  \\
\displaystyle\vspace{0.2cm} 
\partial_t \pp{\ell_0}{\dot R} - \partial^2_s \pp{\ell_0}{ R''} +  \partial_s \pp{\ell_0}{ R'} -  \pp{\ell_0}{ R}-\mu \frac{\partial Q}{\partial R}=0 
\\
\displaystyle\vspace{0.2cm} \partial _t \boldsymbol{\Omega} = \boldsymbol{\omega} \times \boldsymbol{\Omega} +\partial _s  \boldsymbol{\omega}, \qquad  \partial _t \boldsymbol{\Gamma} + \boldsymbol{\omega} \times \boldsymbol{\Gamma} = \partial _s \boldsymbol{\gamma} + \boldsymbol{\Omega} \times \boldsymbol{\gamma}\\
\displaystyle \partial _t Q + \partial _s (Q u)=0\,.
\end{array}\right.
\end{equation}
A direct comparison with the compressible system \eqref{full_3D} shows that the Lagrange multiplier $ \mu $ plays the role of the pressure, in complete analogy with pressure definition for  incompressible fluid mechanics.   When $R$ is assumed to be constant, this model recovers the one derived in \cite{FGBPu2014,FGBPu2015}.

\rem{
\todo{FGB: In the incompressible case we have neglected the internal energy $e$. If we take it into account, we have the same equations \eqref{full_3D_incompressible} with $ \mu $ replaced by $\mu +p$, $p= \rho ^2 \frac{\partial e}{\partial\rho }$, the sum of a geometric and a thermodynamic pressure.\\
Note that I have changed the convention $ \mu \rightarrow - \mu $ with respect to (2.17) in our paper \cite{FGBPu2015}, to have $\mu $ analogous to $p$ and not $-p$.}
}

\section{Geometric approach to the variational principle}\label{Geometric_approach} 

In this section, we justify the variational principle used earlier by showing that it is obtained from a reduction by symmetry of the classical Hamilton principle of Lagrangian mechanics. We first consider an abstract setting that involves both a left and a right symmetry of the Lagrangian, which correspond to the elastic and fluid components of the system, written on the infinite dimensional configuration space $\mathcal{Q}$. This general setting rigorously explains the link between the symmetries of the problem, the form of the variational principle and the choice of the variables.
We shall apply this setting to the tube conveying fluid, both in the compressible and incompressible cases.
We shall then apply this setting to the tube conveying fluid, both in the compressible and incompressible cases, thereby justifying the approach developed in Section \ref{sec:general}.

\paragraph{Lagrangian and symmetries.} Consider two Lie groups $G$ and $H$, with Lie algebras $ \mathfrak{g}  $ and $ \mathfrak{h}  $. We assume that $G$ acts on the left on a manifold $P$ and that $H$ acts on the right on a manifold $N$. We shall denote these actions as
\begin{equation}\label{LR_actions}
\begin{aligned} 
&\Phi : G \times P \rightarrow P, \quad (g,p) \mapsto \Phi _g(p), \quad \Phi_g\circ \Phi_h=\Phi_{gh}\\
&\Psi  : H \times N \rightarrow N, \quad (h,n) \mapsto \Psi _h(n), \quad \Psi_g\circ \Psi_h= \Psi_{hg}.
\end{aligned}
\end{equation}  
Given the Lie algebra elements $ \zeta \in \mathfrak{g}  $ and $u\in\mathfrak{h}  $, we consider the associated infinitesimal generators $ \zeta _P$ and $ u_N$, which are the vector fields on $P$, resp., $N$ defined by
\begin{equation}\label{infinit_gen}
\zeta _P(p):= \left.\frac{d}{d\varepsilon}\right|_{\varepsilon=0} \Phi _{ \exp( \varepsilon \zeta )}(p), \quad\text{resp.,}\quad  u _N(n):= \left.\frac{d}{d\varepsilon}\right|_{\varepsilon=0} \Psi _{ \exp( \varepsilon u )}(n),
\end{equation}  
where $\exp$ denotes the exponential map of the Lie groups. Any action of a Lie group $G$ on a manifold $P$, induces an action of $G$ on the cotangent bundle, or phase space, $T^*P$ of $P$. This cotangent lifted action preserves the canonical symplectic form on $T^*P$.

As we shall see below, the equations of motion naturally involve the expression of the momentum maps associated to the cotangent lifted actions of $G$ and $H$ on $T^*P$ and $T^*N$. These momentum maps, sometimes referred to as simply the  cotangent lift momentum maps, are given by
\begin{equation}
\begin{aligned} 
&\mathbb{J} _L: T^*P \rightarrow  \mathfrak{g}  ^\ast , \quad \left\langle  \mathbb{J}  _L( \alpha _p), \zeta \right\rangle = \left\langle \alpha _p, \zeta _P(p) \right\rangle\\
&\mathbb{J} _R: T^*N \rightarrow  \mathfrak{h}  ^\ast , \quad \left\langle  \mathbb{J}  _R( \alpha _n), u \right\rangle = \left\langle \alpha _n, u _N(n) \right\rangle ,
\end{aligned} 
\label{MomMap_def}
\end{equation} 
where  $ \mathfrak{g}  ^\ast $, $ \mathfrak{h}  ^\ast $ are the dual spaces to $ \mathfrak{g}  $, $ \mathfrak{h}  $, and where $\alpha _p \in T^*_pP$, $ \alpha _n \in T^*_nN$, with $T^*P$, $T^*N$ the cotangent bundles of $P$ and $N$. In \eqref{MomMap_def}, the pairing on the left hand side of the equality is between the Lie algebra and its dual, while the pairing on the right hand side of the equality is between the tangent space and the cotangent space at a given point of the manifold.
Note that $\alpha _p \in T^*_pP$ (resp.,  $ \alpha _n \in T^*_nN$), being the element of the cotangent bundle, contains the information of both the vectors from the cotangent spaces and the base points $p$ (resp., $n$) themselves. Therefore, the momentum map $\mathbb{J}_L$ (resp.,  $\mathbb{J}_R$) is a function of a vector from the cotangent space \textit{and} the base point. For instance, if the manifold $P$ is a vector space, we have $T^*P=P\times P^*$ and an element in the cotangent bundle is a couple $\alpha_p=(p,\alpha)$. We refer the reader to, e.g., \cite{MaRa2002} for the definition of momentum maps and their properties.

\medskip

In general, given the configuration manifold $Q$ of a mechanical systems and its Lagrangian $\mathsf{L}:TQ\rightarrow \mathbb{R}$, the Hamilton principle reads
\begin{equation}\label{HP}
\delta\int_0^T \mathsf{L}\big(q(t),\dot q(t)\big) \mbox{d}t=0,
\end{equation}
for arbitrary variations $\delta q(t)$ of the curve $q(t)$ with fixed endpoints. In \eqref{HP}, and in the whole paper, we use the local notation $(q,\dot q)$ for the arguments of the Lagrangian, however, our treatment is intrinsic and valid on arbitrary configuration manifolds $Q$.

We now suppose that the configuration manifold is $Q=G \times H$ and consider the Lagrangian function $\mathsf{L}:T(G \times H) \rightarrow \mathbb{R}$. The Hamilton principle reads
\begin{equation}\label{HP_G} 
\delta \int_0^T \mathsf{L}\big(g(t), \dot g(t), h(t), \dot h(t)\big) \mbox{d}t=0,
\end{equation} 
for arbitrary variations $ \delta g(t)$ and $ \delta h(t)$ vanishing at $t=0, T$.

Let us suppose that $\mathsf{L}$ depends parametrically on some reference values $p_0 \in P$, $n_0\in N$ and assume that $\mathsf{L}$ is invariant under the action of the subgroup $G_{p_0} \!\times \!H_{n_0} \subset G\times H$, where $G_{p_0}=\{ g \in G\mid \Phi _g(p_0)=p_0\}$ and $H_{n_0}= \{ h \in H \mid \Psi _h(n_0)=n_0\}$ are the isotropy subgroups of $p_0$ and $n_0$, with respect to the actions given in \eqref{LR_actions}. This invariance is written as
\begin{equation}\label{symmetry}
\mathsf{L}\big(  \bar g g, \bar g\dot g, h \bar h, \dot h \bar h\big) = \mathsf{L}\big(g, \dot g, h, \dot h\big), \quad \text{for all} \quad (\bar g, \bar h ) \in G_{p_0}\!\times\! H_{n_0},
\end{equation} 
where we note that $G_{p_0}$ acts on the right on $G$ and $H_{n_0}$ acts on the left on $H$.

\paragraph{Lagrangian reduction.} We now use the symmetry \eqref{symmetry} to rewrite the equations of motion on the reduced space by following the process of Lagrangian reduction by a subgroup of the Lie group configuration space, see \cite{HoMaRa1998}, \cite{GBRa2009}, and \cite{GBTr2010} for linear, affine, and general actions, respectively, and various applications. We shall follow the setting of \cite{GBTr2010}.
From the invariance \eqref{symmetry}, $\mathsf{L}$ induces a reduced Lagrangian $\ell$ on the quotient manifold
\[
\big(T(G \times H)\big)/(G_{p_0}\! \times \!H_{n_0})
\]
consisting of the tangent bundle of the configuration space, divided by the symmetry group. We identify this quotient with the manifold $ \mathfrak{g}  \times \mathfrak{h} \times \mathcal{O}$, where $ \mathcal{O} $ is the $(G \times H)$-orbit of $(p_0, n_0) \in P \times N$, as follows
\begin{equation}\label{identification_Lagr}
\begin{aligned} 
\big(T(G \times H)\big)/(G_{p_0}\! \times\! H_{n_0}) &\longrightarrow \mathfrak{g}  \times \mathfrak{h} \times \mathcal{O}\\
[g, \dot g, h, \dot h] &\longmapsto \big( g ^{-1} \dot g, \dot h h ^{-1} , \Phi _{g^{-1}} (p_0) , \Psi _{h^{-1} }(n_0) 
\big),
\end{aligned}
\end{equation} 
where $[g, \dot g, h, \dot h]$ denotes the equivalence class of $(g, \dot g, h, \dot h)$ in the quotient manifold $\big(T(G \times H)\big)/(G_{p_0}\! \times\! H_{n_0})$.
Consistently with this identification, the reduced curves associated to $\big( g(t), h(t)\big) \in G  \times H$ are
\begin{equation}\label{reduced_curves} 
\begin{aligned} 
\mbox{Left-invariant}: \quad \zeta (t)&=g (t)^{-1} \dot g(t) \in \mathfrak{g}   \qquad p(t)=\Phi _{g(t)^{-1}} (p_0)\in P\, ,   \\
\mbox{Right-invariant}:  \quad u(t)&=\dot h(t) h(t) ^{-1}  \in\mathfrak{h}  \qquad \, n(t)=\Psi _{h(t)^{-1} }(n_0)\in N.
\end{aligned}
\end{equation} 
The Hamilton principle \eqref{HP} induces the following variational principle for the reduced Lagrangian $\ell:  \mathfrak{g}  \times \mathfrak{h} \times \mathcal{O} \rightarrow \mathbb{R}  $
\begin{equation}\label{red_HP} 
\delta \int_0^T  \ell\big( \zeta (t), u(t), p(t),n(t)\big) \mbox{d} t=0,
\end{equation} 
for variations $ \delta \zeta (t) $, $ \delta u (t) $, $ \delta p (t) $, $ \delta n(t) $ given by
\begin{equation}\label{red_variations}
\begin{aligned} 
\delta \zeta &= \dot { \sigma }+[ \zeta , \sigma ] \qquad  \delta p= - \sigma _P(p) \\
\delta u &= \dot v-[u,v] \qquad \, \,\delta n= - v_N(n),
\end{aligned}
\end{equation}
where $ \sigma (t) \in \mathfrak{g}  $ and $ v(t) \in \mathfrak{h}  $ are arbitrary curves vanishing at $t=0,T$.
The expressions \eqref{red_variations} are obtained by computing the variations of the curves $ \zeta (t)= g(t) ^{-1} \dot g(t)$, $ u(t)= \dot h(t) h(t) ^{-1} $, $ p(t)= \Phi _{ g(t)^{-1} }(p_0)$, $n(t) =  \Psi _{ h(t)^{-1} }(n_0)$ induced by the variations $ \delta g (t) $, $ \delta h(t) $ of the curves $g(t)$ and $h(t)$, where we defined $ \sigma := g ^{-1} \delta g\in \mathfrak{g}  $ and $ v:= \delta h h ^{-1}\in\mathfrak{h}  $.

By applying the variational principle \eqref{red_HP}-\eqref{red_variations}, we get the system of equations
\begin{equation}\label{general_system} 
\left\{ 
\begin{array}{l}
\displaystyle\vspace{0.2cm}\frac{d}{dt} \frac{\delta \ell}{\delta  \zeta }- \operatorname{ad}^*_ \zeta \frac{\delta \ell}{\delta \zeta }+ \mathbb{J}  _L \left( \frac{\delta \ell}{\delta p} \right) =0\\
\displaystyle\frac{d}{dt} \frac{\delta \ell}{\delta u}+\operatorname{ad}^*_ u \frac{\delta \ell}{\delta u}  + \mathbb{J}  _R\left( \frac{\delta \ell}{\delta n} \right) =0,
\end{array}
\right.
\end{equation}
see \cite{GBTr2010} for details.
The partial derivatives $\frac{\delta \ell}{\delta \zeta } \in \mathfrak{g}  ^\ast $ and  $ \frac{\delta \ell}{\delta p}  \in T^*_pP$ are defined as
\[
\left\langle \frac{\delta \ell}{\delta \zeta }, \delta \zeta \right\rangle = \left.\frac{d}{d\varepsilon}\right|_{\varepsilon=0} \ell\big( \zeta + \varepsilon \delta \zeta ,u,p,n\big) \quad\text{and}\quad \left\langle\frac{\delta \ell}{\delta p}, \delta p \right\rangle = \left.\frac{d}{d\varepsilon}\right|_{\varepsilon=0} \ell\big( \zeta , u, p(\varepsilon) , n\big), 
\]
where $p( \varepsilon) \in P$ is a curve with $p(0)=p$ and $ \left.\frac{d}{d\varepsilon}\right|_{\varepsilon=0} p( \varepsilon )= \delta p \in T_pP$. 

The coadjoint operator $ \operatorname{ad}^*_ \zeta : \mathfrak{g}  ^\ast \rightarrow \mathfrak{g} ^\ast $ in the first equation in \eqref{general_system} is defined by $ \left\langle  \operatorname{ad}^*_ \zeta \mu , \sigma\right\rangle = \left\langle \mu , [\zeta , \sigma ]\right\rangle $, for $ \mu\in\mathfrak{g}  ^\ast $, $ \zeta , \sigma\in\mathfrak{g}  $, where $[ \zeta , \sigma ]$ is the Lie bracket on $ \mathfrak{g}  $. Similar definitions hold for $ \frac{\delta \ell}{\delta u}\in \mathfrak{h}  ^\ast$, $ \frac{\delta \ell}{\delta n}  \in T^*_nN$, and $ \operatorname{ad}^*_ u: \mathfrak{h} ^\ast \rightarrow\mathfrak{h} ^\ast $. 

The system \eqref{general_system} is accompanied with the equations for the curves $p(t)$ and $n(t)$, which follow from their definition in \eqref{reduced_curves}, namely
\begin{equation}\label{m_n_advection} 
\dot p+ \zeta _P(p)=0, \quad \dot n+ u_N(n)=0,
\end{equation}
where $ \zeta _P$ and $ u_N$ are the infinitesimal generators defined in \eqref{infinit_gen}.

\medskip

The considerations of this paragraph are summarized in the following theorem, which is a special instance of a result in \cite{GBTr2010}, see also \cite{GBRa2009}, \cite{HoMaRa1998}.

\begin{theorem}\label{theorem} Consider a Lagrangian $\mathsf{L}:T(G\times H)\rightarrow \mathbb{R}$ with the invariance described in \eqref{symmetry} and define the associated reduced Lagrangian $\ell:\mathfrak{g}\times\mathfrak{h}\times\mathcal{O}\rightarrow\mathbb{R}$. Consider a curve $(g(t), h(t))\in G\times H$ and define $(\zeta(t),u(t), p(t),n(t))\in \mathfrak{g}\times\mathfrak{h}\times\mathcal{O}$ as earlier. In particular, equations \eqref{m_n_advection} hold. Then the following are equivalent.
\begin{itemize}
\item[{\bf (i)}] The curve $(g(t), h(t))$ is critical for the Hamilton principle associated to $\mathsf{L}$;
\item[{\bf (ii)}] The curve $(g(t), h(t))$ is a solution of the Euler-Lagrange equations associated to $\mathsf{L}$;
\item[{\bf (ii)}] The curve $(\zeta(t),u(t), p(t),n(t))$ is critical for the reduced Hamilton principle \eqref{red_HP}--\eqref{red_variations} associated to $\ell$;
\item[{\bf (iv)}] The curve $(\zeta(t),u(t), p(t),n(t))$ is a solution of the reduced Euler-Lagrange equations \eqref{general_system}.
\end{itemize}
\end{theorem}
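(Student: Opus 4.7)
The plan is to establish the four equivalences as a chain, with two routine links and one genuinely nontrivial link that constitutes the heart of the reduction argument. The equivalence (i) $\Leftrightarrow$ (ii) is the classical Hamilton principle on $T(G\times H)$, standard calculus of variations. The equivalence (iii) $\Leftrightarrow$ (iv) is obtained by expanding the reduced variational principle \eqref{red_HP} using the expressions \eqref{red_variations} for $\delta\zeta,\delta u,\delta p,\delta n$ and integrating by parts in $t$: the terms linear in $\sigma$ produce the first equation of \eqref{general_system} (via the definition of $\operatorname{ad}^*$ and of the momentum map $\mathbb{J}_L$ through $\langle \delta\ell/\delta p,-\sigma_P(p)\rangle = -\langle \mathbb{J}_L(\delta\ell/\delta p),\sigma\rangle$), and similarly the terms linear in $v$ produce the second equation (with a sign flip coming from the right action). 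The crux is therefore to establish (i) $\Leftrightarrow$ (iii).

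For that equivalence, I would first verify that the reduction map $[g,\dot g,h,\dot h]\mapsto(\zeta,u,p,n)$ defined in \eqref{identification_Lagr} is a well-defined bijection onto $\mathfrak g\times\mathfrak h\times\mathcal O$, using that the isotropy subgroups $G_{p_0}$ and $H_{n_0}$ are precisely the stabilizers of the base points. The invariance hypothesis \eqref{symmetry} guarantees that $\mathsf L$ is constant on the fibers of this map, so the reduced Lagrangian $\ell$ is well defined and the two action integrals coincide along any curve. Next I would compute the variations induced on $(\zeta,u,p,n)$ by free variations $\delta g,\delta h$ vanishing at the endpoints. Setting $\sigma:=g^{-1}\delta g\in\mathfrak g$ and $v:=\delta h\,h^{-1}\in\mathfrak h$, a direct calculation using $\delta(g^{-1})=-g^{-1}(\delta g)g^{-1}$ together with equality of mixed partials $\partial_t\delta=\delta\partial_t$ yields
\begin{equation*}
\delta\zeta=\delta(g^{-1}\dot g)=\dot\sigma+[\zeta,\sigma],\qquad \delta u=\delta(\dot h\,h^{-1})=\dot v-[u,v],
\end{equation*}
while differentiating $p(\varepsilon)=\Phi_{g(\varepsilon)^{-1}}(p_0)$ and $n(\varepsilon)=\Psi_{h(\varepsilon)^{-1}}(n_0)$ in $\varepsilon$ and using the definition \eqref{infinit_gen} of infinitesimal generators gives $\delta p=-\sigma_P(p)$ and $\delta n=-v_N(n)$. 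Since $\sigma$ and $v$ inherit the vanishing at $t=0,T$ from $\delta g$ and $\delta h$, and conversely any curves $\sigma(t),v(t)\in\mathfrak g\times\mathfrak h$ vanishing at the endpoints arise in this way (by solving for $\delta g,\delta h$ via the Lie group exponential), the variations admissible in (iii) are exactly the images of the variations admissible in (i). Combined with the equality of action functionals, this yields (i) $\Leftrightarrow$ (iii).

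The main obstacle is the bookkeeping inherent to a mixed left/right setting with two group actions on two auxiliary manifolds: one must keep track of signs (left invariance gives $+[\zeta,\sigma]$ while right invariance gives $-[u,v]$), of the direction in which the isotropy subgroups act on $G$ and $H$, and of the fact that the momentum maps $\mathbb J_L,\mathbb J_R$ convert the ``advected parameter'' variations $\delta\ell/\delta p,\delta\ell/\delta n$ into dual Lie algebra elements so that the reduced Euler--Lagrange equations \eqref{general_system} live in $\mathfrak g^\ast$ and $\mathfrak h^\ast$. All other steps are essentially mechanical once the variation formulas \eqref{red_variations} are in hand. Since the result is a special case of the affine/general Lagrangian reduction framework of \cite{GBTr2010,GBRa2009}, I would carry out the computation in our notation and refer to those works for the abstract reduction theorem that guarantees the identifications \eqref{identification_Lagr} are indeed diffeomorphisms onto $\mathfrak g\times\mathfrak h\times\mathcal O$.
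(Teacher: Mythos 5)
Your proposal is correct and follows essentially the same route as the paper: the paper presents this theorem as a special instance of the Lagrangian reduction results of \cite{GBTr2010} (see also \cite{GBRa2009,HoMaRa1998}), and its justification consists precisely of the identification \eqref{identification_Lagr}, the computation of the induced constrained variations \eqref{red_variations} from $\sigma=g^{-1}\delta g$ and $v=\delta h\,h^{-1}$, and the derivation of \eqref{general_system} from the reduced principle, with the remaining details delegated to the cited reduction theorems. You have simply filled in the same skeleton, with the correct signs and the correct role of the momentum maps $\mathbb{J}_L$, $\mathbb{J}_R$.
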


\paragraph{Application to the expandable tube conveying compressible flows.} In this application, $g(t)$ is given by the tube variables $(\Lambda(t), \mathbf{r}(t))$, and $h(t)$ is given by the flow variable $\varphi(t)$. In addition to $\Lambda(t)$, $\mathbf{r}(t)$, and $\varphi(t)$, the geometrically exact expandable tube also involves the variable $R(t)$ but this variable is not involved in the reduction process. We shall briefly mention the inclusion of such variables later. Also, in this application, the configuration space associated to $\varphi(t)$ is not a Lie group
however for simplicity we restrict the discussion in this paragraph to the case of a Lie group and leave the more general case as an exercise for the reader as the resulting formulas are similar.
The Lie algebra element $\zeta$, resp., $u$ in the general setting corresponds to the convective velocities $(\boldsymbol{\omega}, \boldsymbol{\gamma})$, resp., the Eulerian velocity $u$ in the application.
The abstract expressions $\zeta= g^{-1}\dot g$ and $u=\dot h h^{-1}$ for the convective and Eulerian velocities recover the expressions \eqref{Eulerian_velocity_u} and the first two expressions in \eqref{def_conv_var}.

\medskip

In this application, the manifold $P$ contains two variables, the convective angular and linear deformation gradients $\boldsymbol{\Omega} $ and $\boldsymbol{\Gamma}$ on which a group element $(\Lambda,\mathbf{r})$ acts \textit{affinely} on the \text{left} as
\begin{equation}\label{left_action}
( \boldsymbol{\Omega} , \boldsymbol{\Gamma}) \mapsto \operatorname{Ad}_{( \Lambda , \mathbf{r} )}(  \boldsymbol{\Omega} , \boldsymbol{\Gamma}) + ( \Lambda , \mathbf{r} ) \partial _s ( \Lambda , \mathbf{r} ) ^{-1},
\end{equation}
where $\operatorname{Ad}_{( \Lambda , \mathbf{r} )}$ denotes the adjoint action of $SE(3)$. We refer to \cite{GBRa2009} for more details regarding affine actions of the type \eqref{left_action} and their applications in the context of Lagrangian reduction by symmetry.
The manifold $N$ contains two variables, the mass density per unit length $\xi$ and the specific entropy $S$, on which $\varphi$ acts \textit{linearly} on the \textit{right} as
\begin{equation}\label{right_action}
\xi  \mapsto ( \xi \circ \varphi ) \partial _s\varphi \quad\text{and}\quad S \mapsto S \circ \varphi .
\end{equation}
The actions \eqref{left_action} and \eqref{right_action} correspond to the actions \eqref{LR_actions} in the general setting explained above.
The reference value $p_0$ is given by the initial values for the mass density and specific entropy, \emph{i.e.}, $p_0=(\xi_0,S_0)$. The reference value $n_0$ is chosen as $n_0= (\boldsymbol{\Omega}_0,\boldsymbol{\Gamma}_0)=(\mathbf{0}, \mathbf{0})$.

\medskip
To complete the variational considerations of the tube with expandable wall conveying compressible fluid, we need to remember that there are additional non-reduced variables $a(t)\in K$, in some configuration manifold $K$, \emph{e.g.}, $a=R$, which we did not consider in this abstract treatment, as we were focusing on the symmetry reduction only. These variables will satisfy the Euler-Lagrange equations in the abstract form
\begin{equation} 
\label{EL_a} 
\pp{}{t} \dede{\ell}{\dot a} - \dede{\ell}{a}=0 \, . 
\end{equation} 
For our considerations in this paper, we set $a(t,s)=R(t,s)$, a scalar function. Theorem \ref{theorem} easily extends to this case, the unreduced and reduced space being given by $T(G\times H \times K)$ and $\mathfrak{g}\times\mathfrak{h}\times\mathcal{O}\times TK$, respectively.

\medskip

With these choices, the variational principle \eqref{red_HP}--\eqref{red_variations} recovers the variational principle \eqref{min_action_gas}--\eqref{delta_S}. From Theorem \ref{theorem}, it follows that the variational principle \eqref{min_action_gas}--\eqref{delta_S} is a symmetry reduced version of the classical Hamilton principle \eqref{HP_total} on the configuration manifold $\mathcal{Q}$ of the mechanical system. This rigorously justifies from first principles the variational principle  \eqref{min_action_gas}--\eqref{delta_S} used to derive our model of geometrically exact expandable tube conveying gas.

\medskip

We shall remark that a more complex shape model may require the variable $a$ to be \emph{e.g.}, a two-dimensional function, for example, describing the major semi-axes of an ellipse as a cross-section, or be a multi-variable function describing shape functions of increasing complexity. The abstract equation \eqref{EL_a} will still be valid, although it will need to be re-written in an explicit form for further analysis.

\rem{
As a preparation for the incompressible case considered later, we rewrite the abstract equations in terms of a quantity of interest $\rho \in V$ which depends on the variables $p$ and $n$, i.e., we assume
\[
\rho=F(p,n),
\]
for a given function $ F :P \times N \rightarrow V$, where $V$ is a vector space. 
This variable is the mass density per unit volume in our example and the function $F$ is given by \eqref{relation_xi_rho_Q}. We then make explicit the dependence of the Lagrangian $\ell$ on the variable $ \rho $ by introducing a function $\tilde{\ell}$ such that
\begin{equation}\label{ell_l}
\ell\big( \zeta , u, p,n \big)=\tilde{\ell}\big( \zeta , u, p,n, \rho \big), \quad \rho = F(p,n).
\end{equation} 
In terms of the function $\ell$, the system \eqref{general_system} becomes
\begin{equation}\label{general_system_ell} 
\left\{ 
\begin{array}{l}
\displaystyle\vspace{0.2cm}\frac{d}{dt} \frac{\delta \tilde{\ell}}{\delta  \zeta }- \operatorname{ad}^*_ \zeta \frac{\delta \tilde{\ell}}{\delta \zeta }+ \mathbb{J}  _L \left( \frac{\delta \tilde{\ell}}{\delta p}+ \left[ \frac{\partial F }{\partial p} \right] ^\ast  \frac{\delta \tilde{\ell}}{\delta \rho }   \right) =0\\
\displaystyle\frac{d}{dt} \frac{\delta \tilde{\ell}}{\delta u}+\operatorname{ad}^*_ u \frac{\delta \tilde{\ell}}{\delta u}  + \mathbb{J}  _R\left( \frac{\delta \tilde{\ell}}{\delta n}+ \left[ \frac{\partial F }{\partial n} \right] ^\ast  \frac{\delta \tilde{\ell}}{\delta \rho }   \right) =0,
\end{array}
\right.
\end{equation}
where $ \frac{\partial F }{\partial p}:T_pP \rightarrow V$ and $ \frac{\partial F }{\partial n}:T_nN \rightarrow V$ are the partial derivatives of the function $ F $ and $ \big[ \frac{\partial F }{\partial p}\big] ^\ast : V^\ast \rightarrow T^*_pP$, $ \big[ \frac{\partial F }{\partial n}\big] ^\ast : V^\ast \rightarrow T^*_nN$ denote their adjoint.
}


\paragraph{Incompressibility of the  fluid as a holonomic constraint.} Still in the abstract setting described above, we consider a certain quantity of interest $\rho \in V$ which depends on the variables $p$ and $n$, i.e., we assume
\[
\rho=F(p,n),
\]
for a given function $ F :P \times N \rightarrow V$, where $V$ is a vector space. In our application, $\rho$ is the mass density per unit volume and the function $F$ is given by the relation \eqref{relation_xi_rho_Q}.

We assume that the quantity $\rho$ is constrained to be a constant in time, i.e., 
\[
F \big(p(t),n(t)\big)= \rho _0= \text{constant},
\]
for all $t$.
In our example this corresponds to the incompressibility constraint in \S\ref{sec_incomp}.
This defines a holonomic constraint which, from the relations on the right hand side of \eqref{reduced_curves}, corresponds to the subset
\[
C:= \left \{(g,h)\mid G \times H \mid F\big( \Phi_{g ^{-1} }(p_0), \Psi_{h^{-1} }(n_0) \big)= \rho _0\right \} \subset G \times H
\]
of the configuration manifold $Q= G \times H$. We assume that $C$ is a submanifold of $Q$.

The associated equations of evolutions are obtained by a standard Lagrange multiplier approach, namely, we replace \eqref{HP_G} by the variational principle
\begin{equation}\label{HP_constraints} 
\delta \int_0^T\left[ \mathsf{L}\big(g(t), \dot g(t), h(t), \dot h(t)\big) + \Big\langle \mu (t),F\big( \Phi_{g(t) ^{-1} }(p_0), \Psi_{h(t)^{-1} }(n_0) \big)-\rho _0 \Big\rangle  \right]  \mbox{d}t=0,
\end{equation}
for variations $ \delta g(t)$ and $ \delta h(t)$, vanishing at $t=0, T$, and variations $ \delta \mu (t)$. The Lagrangian multiplier $ \mu $ is an element of the dual vector space $V ^\ast $.

Assuming the same invariance as before for the Lagrangian, i.e. \eqref{symmetry}, and  observing that the constraint also has the same symmetry, we obtain that \eqref{HP_constraints} induces the following variational principle for the reduced Lagrangian $\ell:  \mathfrak{g}  \times \mathfrak{h} \times \mathcal{O} \rightarrow \mathbb{R}  $:
\begin{equation}\label{red_HP_constraints} 
\delta \int_0^T  \big[ \ell\big( \zeta (t), u(t), p(t),n(t)\big)+ \big\langle \mu (t), F(p(t), n(t))- \rho _0 \big\rangle\big]  \mbox{d} t=0,
\end{equation} 
for variations $ \delta \zeta (t) $, $ \delta u (t) $, $ \delta p (t) $, $ \delta n(t) $ given by \eqref{red_variations} and variations $ \delta \mu (t)$. It results in the system
\begin{equation}\label{general_system_constraint} 
\left\{ 
\begin{array}{l}
\displaystyle\vspace{0.2cm}\frac{d}{dt} \frac{\delta \ell}{\delta  \zeta }- \operatorname{ad}^*_ \zeta \frac{\delta \ell}{\delta \zeta }+ \mathbb{J}  _L \left( \frac{\delta \ell}{\delta p}+ \left[ \frac{\partial F }{\partial p} \right] ^\ast   \mu    \right) =0\\
\displaystyle\vspace{0.2cm}\frac{d}{dt} \frac{\delta \ell}{\delta u}+\operatorname{ad}^*_ u \frac{\delta \ell}{\delta u}  + \mathbb{J}  _R\left( \frac{\delta \ell}{\delta n}+ \left[ \frac{\partial F }{\partial n} \right] ^\ast  \mu   \right) =0\\
F( p,n)=\rho _0.
\end{array}
\right.
\end{equation}
One can immediately deduce the extension of Theorem \ref{theorem} to the case in which a holonomic constraint is imposed, together with the equation \eqref{EL_a} for the non-reduced variables $a$. This setting justifies the variational approach used in \S\ref{sec_incomp} to derive the system \eqref{full_3D_incompressible}.

\paragraph{Hamiltonian structure for the expandable tube conveying compressible flows.} We shall now quickly describe the Hamiltonian side corresponding to the Lagrangian variational formulation described above. We first describe the abstract setting and employ the general process of Poisson reduction symmetry, see \cite{MaRa2002}, to derive the Hamiltonian structure of the geometrically exact expandable tube conveying gas in terms of a noncanonical Poisson bracket.

By assuming that the Lagrangian $\mathsf{L}:T(G \times H) \rightarrow \mathbb{R}  $ is hyperregular, we can define its associated Hamiltonian $\mathsf{H}:T^*( G \times H) \rightarrow \mathbb{R}  $ by the usual Legendre transform. The Euler-Lagrange equations for $\mathsf{L}$ are equivalent to the Hamilton equations for $\mathsf{H}$. These equations are Hamiltonian with respect to the canonical Poisson bracket on $T^*(G\times H)$.

The Hamiltonian $\mathsf{H}$ verifies the same invariance with $\mathsf{L}$ and thus induces a reduced Hamiltonian, denoted $h$, on the quotient manifold $\big(T^*(G \times H)\big)/(G_{p_0}\! \times \!H_{n_0})$. In a similar way with \eqref{identification_Lagr}, we identify this quotient manifold with the manifold $ \mathfrak{g} ^\ast \times\mathfrak{h} ^\ast\times \mathcal{O} $ as
\begin{equation}\label{identification_Ham}
\begin{aligned} 
\big(T^*(G \times H)\big)/(G_{p_0}\! \times\! H_{n_0}) &\longrightarrow \mathfrak{g}^*  \times \mathfrak{h} ^*\times \mathcal{O}\\
[g,  \alpha , h, \beta ] &\longmapsto \big( g ^{-1} \alpha ,  \beta  h^{-1} , \Phi _{g^{-1}} (p_0) , \Psi _{h^{-1} }(n_0) 
\big).
\end{aligned}
\end{equation} 
From \eqref{identification_Ham} we deduce that, on the Hamiltonian side, the reduced curves associated to $(g(t), \alpha (t), h(t), \beta (t)) \in T^*(G \times H)$ are
\begin{equation}\label{reduced_curves_Ham} 
\begin{aligned} 
\mu  (t)&=g (t)^{-1} \alpha (t) \in \mathfrak{g} ^*  \qquad p(t)=\Phi _{g(t)^{-1}} (p_0)\in P\\
\nu(t)&=\beta (t) h(t) ^{-1}  \in\mathfrak{h}^*  \qquad \, n(t)=\Psi _{h(t)^{-1} }(n_0)\in N.
\end{aligned}
\end{equation}
In terms of the reduced Hamiltonian $h( \mu , \nu, p, n)$, the system of equations \eqref{general_system}--\eqref{m_n_advection} can be written in Poisson bracket form as
\[
\dot f=  \{f,h\}_L+\{f,h\}_R,
\]
for the noncanonical Poisson brackets $\{\,,\}_L$ and $\{\,,\}_R$ given by
\begin{equation}\label{PB_R} 
\begin{aligned}
\{f,h\}_L&=- \left\langle \mu ,\left[\frac{\delta f}{\delta \mu },\frac{\delta h}{\delta \mu }\right] \right\rangle + \left\langle\frac{\delta f}{\delta \mu }, \mathbb{J}  _L\left( \frac{\delta h}{\delta p} \right) \right\rangle- \left\langle\frac{\delta h}{\delta \mu }, \mathbb{J}  _L\left( \frac{\delta f}{\delta p} \right) \right\rangle\\
\{f,h\}_R&=+ \left\langle \nu ,\left[\frac{\delta f}{\delta \nu },\frac{\delta h}{\delta \nu }\right] \right\rangle + \left\langle\frac{\delta f}{\delta \nu }, \mathbb{J}  _R\left( \frac{\delta h}{\delta n} \right) \right\rangle- \left\langle\frac{\delta h}{\delta \nu }, \mathbb{J}  _R\left( \frac{\delta f}{\delta n} \right) \right\rangle.
\end{aligned}
\end{equation} 
We refer to \cite{GBTr2010} for the derivation of this type of Poisson bracket by reduction by symmetry of the canonical Poisson bracket on $T^*(G\times H)$.

In terms of the reduced Lagrangian $\ell(\zeta,u,p,n)$ defined on $\mathfrak{g}\times\mathfrak{h}\times\mathcal{O}$, the reduced Hamiltonian on $\mathfrak{g}^*\times\mathfrak{h}^*\times\mathcal{O}$ is obtained from the reduced Legendre transform
\[
h(\mu,\nu,p,n):= \left\langle \mu,\zeta\right\rangle+ \left\langle \nu, u\right\rangle - \ell(\zeta,u,p,n),
\]
for $\mu= \frac{\delta\ell}{\delta \zeta}$ and $\nu= \frac{\delta\ell}{\delta u}$.

 In presence of additional non-reduced variables $a$, as in \eqref{EL_a}, the Poisson bracket \eqref{PB_R} is augmented by a canonical Poisson bracket $\{f,g\}_{\rm can}$ in the variables $(a,p_a)$, with $p_a=\dede{\ell}{\dot a}$. In this case the reduced Hamiltonian and Poisson bracket are defined on $\mathfrak{g}^*\times\mathfrak{h}^*\times\mathcal{O}\times T^*K$, where $K$ is the configuration manifold of the variable $a$.

\medskip

For the geometrically exact expandable tube conveying gas, $\mu=(\boldsymbol{\pi}, \boldsymbol{\mu})$ corresponds to the angular and linear momentum of the tube dynamics, $\nu$ is the fluid momentum, $p=(\boldsymbol{\Omega}, \boldsymbol{\Gamma})$, and $n=(\xi, S)$. The Hamiltonian is defined as
\[
h(\bpi,\bmu,  \bOm,\bGam,\nu,\xi,S)= \int_0^L \big(\bpi \cdot \bom + \bmu \cdot \bgam + \nu  u \big) \mbox{d} s- \ell(\bom,\bgam,\bOm,\bGam,u,\xi,S)
\]
and we have the relations
\begin{equation} 
\label{dual_var_def}
\begin{aligned} 
\bpi & =\dede{\ell}{\bom} \, , \quad \bmu=\dede{\ell}{\bgam} \, , \quad \nu= \dede{\ell}{u}\\
\bom & = \dede{h}{\bpi} \, , \quad \bgam = \dede{h}{\bmu}\, , \quad u=\dede{h}{\nu}.
\end{aligned} 
\end{equation} 
The formula \eqref{PB_R} yields the Poisson brackets
\begin{align*}
\{f,g\}_L
=&
-\int_0^L\bpi\cdot\left(
\dede{f}{\bpi}\times\dede{g}{\bpi}\right)\mbox{d}s
-\int_0^L
\bmu\cdot\left(
 \dede{f}{\bmu}\times\dede{g}{\bpi}
- 
\dede{g}{\bmu}\times \dede{f}{\bpi}
\right)\mbox{d}s
\nonumber\\
&-\int_0^L\bOm\cdot\left(
\dede{f}{\bOm}\times\dede{g}{\bpi}
-
\dede{g}{\bOm}\times\dede{f}{\bpi}
\right)\mbox{d}s+ \int_0^L \left( 
\dede{f}{\bOm}\cdot\partial_s\dede{g}{\bpi}
- 
\dede{g}{\bOm}\cdot\partial_s\dede{f}{\bpi} \right) \mbox{d}s
\nonumber\\
&-\int_0^L\bOm\cdot\left(
\dede{f}{\bGam}\times\dede{g}{\bmu}
- \dede{g}{\bGam}\times\dede{f}{\bmu}
\right)\mbox{d}s
\nonumber\\
&
-\int_0^L
\bGam\cdot\left(
\dede{f}{\bGam}\times\dede{g}{\bpi}
-
\dede{g}{\bGam}\times\dede{f}{\bpi}
\right)\mbox{d}s+ \int_0^L \left( \dede{f}{\bGam}\cdot\partial_s\dede{g}{\bmu}- \dede{g}{\bGam}\cdot\partial_s\dede{f}{\bmu} \right) \mbox{d}s,
\end{align*}
\begin{equation}\label{bracket_right}
\begin{aligned}
\{f,g\}_R=& \int_0^L \nu \left(\frac{\partial g}{\partial \nu} \partial _s \frac{\partial f}{\partial \nu}  - \frac{\partial f}{\partial \nu} \partial _s \frac{\partial g}{\partial \nu}  \right)\mbox{d}s  + \int_0^L \xi\left( \frac{\partial g}{\partial \nu}\partial _s \frac{\partial f}{\partial  \xi }-\frac{\partial f}{\partial \nu}\partial _s \frac{\partial g}{\partial  \xi } \right)\mbox{d}s \\
&+\int_0^L S\partial _s \left(  \frac{\partial f}{\partial  S } \frac{\partial g}{\partial  \nu}-  \frac{\partial f}{\partial  \nu}  \frac{\partial g}{\partial  S } \right) \mbox{d}s.
\end{aligned}
\end{equation}
This Poisson bracket can be also directly derived by using the relations \eqref{dual_var_def} in equations \eqref{system_ell} and computing the time derivative of an arbitrary function $f$ depending on the variables $ \bpi$, $ \bmu$, $ \nu$, $ \bOm$, $ \bGam$, $ S$ and $ \xi$. The variable $R$, which is not symmetry-reduced, can be easily included, leading to an additional canonical Poisson bracket in the variable $R$ and its corresponding momentum $p_R= \dede{\ell}{\dot R}$.

\rem{ 
\section{One dimensional reductions and exact solutions}\label{sec_1D} 
\label{sec:1D} 

\todo{FGB: This section 4, is not needed I think, since we repeat this later.}
Let us now consider the propagation of 1D shock waves along the tubes without rotation. We need to assume $\Lambda={\rm Id}$ and $ \mathbf{r} =r \mathbf{E} _3$, so $\bom=\mathbf{0}$, $\bOm=\mathbf{0}$, $\bgam= \dot r \mathbf{E}_3$, and $\bGam= r' \mathbf{E}_3$. This allows to drop angular momentum equation completely, reduce the compatibility conditions to 1D scalar equation  
\begin{equation} 
\label{compatibility_1D} 
\gamma_s = \Gamma_t \, , 
\end{equation} 
and reduce the linear momentum equation of \eqref{full_3D} to a scalar 1D equation by projecting on $\mathbf{E}_1$ axis. We then take all the dependent variables to be functions on the combination $x=s-ct$ and express the system \eqref{full_3D} as a system of coupled ODEs. The propagation speed of the shock wave $c$ and the shape of the tube to be observed will then have to be found as a solution of the boundary value problem for the ODEs allowing for the boundary conditions going to fixed points as $x \rightarrow \pm \infty$. Note that these limiting points will have to be different since the pressure before and after the shock wave is different, and leading to different limiting radii before and after the shock wave. 
}

\section{Particular solutions for inextensible unshearable tubes, and comparison with previous works} 
\label{sec:1D_shock}

In this section we shall specify our model to the case of an inextensible and unshearable expandable tube. We then focus on straight expandable tubes with no rotational motion and compare our model with previous works. This is the case that has been studied extensively in the literature, mostly in the context of blood flow involving incompressible fluid, as we show later. We also show that these simplified models arise from a variational principle, which can then be used to derive the incompressible models by the Lagrange multiplier approach.

\subsection{Equations of motion for inextensible and unshearable tubes}

The dynamics of an inextensible and unshearable, but expandable, tube conveying compressible fluid is obtained by imposing the constraint $\bGam (t,s)= \bchi$, for all $t,s$. If we denote by $\mathbf{z}$ the Lagrange multiplier associated to this constraint and add the term $\int_0^T\int_0^L \mathbf{z} \cdot ( \boldsymbol{\Gamma} - \bchi)\mbox{d}s\,\mbox{d}t$
to the action functional in our variational principle \eqref{min_action_gas}, the first two equations in \eqref{full_3D} will change to
\[
\left\lbrace
\!\!\begin{array}{l}
\displaystyle\vspace{0.2cm}\frac{D}{Dt} \pp{\ell_0}{\bom}+\!\bgam\!\times\pp{\ell_0}{\bgam} +\frac{D}{Ds}  \left( \pp{\ell_0}{\bOm} +  (p-p_{\rm ext})  \pp{Q}{\bOm}  \right)   +\!\bGam\!\times 
 \left( 
\pp{\ell_0}{\bGam} + (p-p_{\rm ext})  \pp{Q}{\bGam} + \mathbf{z} 
\right) =0
\\
\displaystyle\frac{D}{Dt}\pp{\ell_0}{\bgam} + \frac{D}{Ds}\left( \pp{\ell_0}{\bGam}+ (p-p_{\rm ext})  \pp{Q}{\bGam}   + \mathbf{z}\right)=0.
\end{array}\right.
\]
The physical meaning of $\mathbf{z}$ is the reaction force enforcing the inextensibility constraint. 

Particular simple solutions of this system can be obtained by assuming the axis of the tube being straight and no rotational motion. In that case, the inextensibility constraint leads to $\mathbf{z} = z \bchi$, the direction along the axis. 
For such particular solutions, the angular momentum equation is satisfied identically, and the linear momentum equation reduces to a one dimensional equation for the reaction force $z$. Therefore, we only need to compute the fluid momentum equation and the Euler-Lagrange equations for $R$. From the third and fourth equations in \eqref{full_3D}, we get
\begin{equation}\label{inext_unshear_norot}  
\left\lbrace\begin{array}{l}
\displaystyle\vspace{0.2cm}\partial _t \big( \xi u \big)+\partial _s \big( \xi u^2 + p A \big) = p \partial _s A \\
\displaystyle a  \ddot R- \partial _s  \pp{F}{R'} + \pp{F}{R} = 2 \pi R \left( p - p_{\rm ext} \right),
\end{array} \right.
\end{equation} 
together with the conservation of mass and entropy 
\begin{equation} 
\partial _t \xi +\partial _s( \xi u)=0, \quad \partial _t S +u \partial _s S=0 ,
\label{xi_S_cons}
\end{equation}
and where we assumed $A(R)=\pi  R^2 $ and that $ \mathbb{I}  $, $ \lambda $, and $ \mathbb{J}  $ do not depend on $R$. This gives a system of four equations for the four variables $u$, $R$, $S$, and $ \xi$, where we recall that $ \rho = \xi /A$ and $p= \rho ^2 \frac{\partial e}{\partial \rho }$. Using mass conservation and $ \rho = \xi /A$, the fluid momentum equation can be rewritten as
\begin{equation}
\label{meq2}
\partial _t u+ u\partial _s u =-\frac{1}{\rho} \partial_s p,
\end{equation}
which has formally the same form with that of 1D compressible fluids. Note that the equation governing the evolution of $\rho$ is different from the corresponding 1D compressible fluid case.

\medskip

Note that the system of equation we obtained in \eqref{inext_unshear_norot}--\eqref{xi_S_cons} has been directly deduced by making the assumption of a straight tube in the equations \eqref{full_3D}. We shall show below, that the resulting system is itself a Lagrangian system arising from Hamilton's principle.
Indeed, an alternative way to think about the case of inextensible and unshearable, but expandable, straight tube is to consider a Lagrangian $\ell$ that depends only on $R$ and $u$ for incompressible flows, and additionally on the thermodynamic variables for compressible flows. We shall consider these two cases separately for clarity of comparison with the previous literature. In that case, $s=x_1$ takes the meaning of the Euclidian coordinate along the axis, with all deformation variables related to the tube being trivial, \emph{i.e.}, $\bgam=\mathbf{0}$, $\bGam=\bchi$, $\bom=\mathbf{0}$ and $\bOm=\mathbf{0}$. 

One may question about the physicality of the assumption $s=x_1$. Indeed, under stretching of the tube along the axis, the material point situated at a given point $s$ will  move along the axis as well which lead to the dependence $\mathbf{r}(t,s)$. Thus, the dynamics moves material points in the direction of the  axis $\bchi=\mathbf{E}_1$. If one were to insist to consider the evolution of the dynamics for a \emph{given} value of the coordinate $x_1$, it would lead to the spatial representation of the elastic model of the tube. Physically, this would give the description of the tube's motion in fixed spatial coordinates, similar to Eulerian description of fluid's motion. Unlike the fluid case, however, describing elastic motion in Eulerian framework  is known to be notoriously difficult from both mathematical and physical considerations \cite{MaHu1994}. 

However, if one assumes both inextensibility of the tube and no elastic deformations of the tube's axis, then the description of elastic motion of the walls in our method, \emph{i.e.}, the convective elastic frame coincides with the spatial frame. All previous works on the subject have implicitly made this assumption. Thus, we will spend some time investigating the inextensible and unshearable tube case, both in order to connect with previous works, and also to make analytic progress for the case of traveling wave solutions. 

\paragraph{Lagrangian variational formulation for compressible fluids in expandable straight tubes.}
We consider the Lagrangian given by
\begin{equation} 
\ell(u,\xi , S, R , \dot R)=\int_0^L \left[ \frac{1}{2} \xi u^2 +\frac{1}{2} a \dot R ^2 - \xi e \left( \frac{ \xi }{ A(R) } ,S\right) -F(R,R' ) \right] {\rm d}s.
\label{lagr_compressible_simple} 
\end{equation}
The variational principle is written as 
\begin{equation}\label{VP_unshear_inext}
\delta \int_0^T \ell(u, \xi , S, R, \dot R)dt=0,
\end{equation}
for  variations $\delta u =\partial _t\eta + u\partial _s \eta -\eta \partial_s u$, $\delta \xi = - \partial _s ( \xi\eta )$, $\delta S= -\eta\partial_sS$, where $\eta(t,s)$ vanishes at $t=0,T$ and for variations $\delta R$ vanishing at $t=0,T$. This variational principle is obtained, exactly as earlier, by rewriting the classical Hamilton principle
\[
\delta \int_0^T \mathsf{L}(\varphi, \dot\varphi, R, \dot R){\rm d}t=0
\]
in terms of the Eulerian variables $u=\dot\varphi\circ\varphi^{-1}$, $\xi= (\xi_0\circ \varphi^{-1})\partial_s\varphi^{-1}$, $S=S_0\circ\varphi^{-1}$ and computing the variations $\delta u$, $\delta\xi$, $\delta S$, induced by $\delta\varphi$.
 {By applying \eqref{VP_unshear_inext} and collecting terms proportional to $\eta$ 
and $\de R$ we get the system  \eqref{inext_unshear_norot}. Equations in \eqref{xi_S_cons} are deduced from the relations $\xi= (\xi_0\circ \varphi^{-1})\partial_s\varphi^{-1}$ and $S=S_0\circ\varphi^{-1}$.

By taking the Legendre transform $(u,\xi, S, R,\dot R)\mapsto (\nu, \xi, S, R, p_R)$, with $\nu=\dede{\ell}{u}$ and $p_R=\dede{\ell}{\dot R}$ we immediately obtain that the system \eqref{inext_unshear_norot}--\eqref{xi_S_cons} is Hamiltonian with respect to the Poisson bracket 
\[
\{f,g\}_R+ \{f,g\}_{\rm can}
\]
where the term is given in \eqref{bracket_right} and the second term is the canonical Poisson bracket in $R$ and $p_R$.

\rem{ 
Equations: abstract
\[
\left\{
\begin{array}{l}
\vspace{0.2cm}\displaystyle\partial _t \frac{\delta \ell}{\delta u} +\partial _s\left(\frac{\delta \ell}{\delta u}u \right)+  \frac{\delta \ell}{\delta u}\partial _s u= \xi \partial_s \frac{\delta \ell}{\delta \xi } - \frac{\delta \ell}{\delta S} \partial _s S\\
\vspace{0.2cm}\displaystyle\partial _t\xi +\partial _s (  \xi u)=0, \qquad \partial _t S+ u \partial _s S=0\\
\displaystyle\partial _t\frac{\delta \ell}{\delta \dot R}- \frac{\delta \ell}{\delta  R} =0
\end{array}
\right.
\]
for $\ell$ above,
} 

\paragraph{Incompressible fluids in expandable straight tubes.} In the incompressible case, we shall only treat the case when thermodynamic effects may be neglected. In general, thermodynamic effects for the incompressible fluid may be considered by using the Lagrangian \eqref{lagr_compressible_simple} with an additional incompressibility conditions. Such considerations may be important later for the cases when we would want to introduce friction and thermal effects. However, for most applications, such as engineering and blood flows, thermodynamic effects are negligible. We will thus consider $\rho=\rho_0=const$ and the Lagrangian in the form 
\begin{equation} 
\label{lagr_incompressible_simple}
\ell(u, \xi, R , \dot R)=\int_0^L \left[ \frac{1}{2} \xi  u^2 +\frac{1}{2} a \dot R ^2  -F(R,R') \right]  {\rm d}s\,. 
\end{equation}
The incompressibility constraint is included via the Lagrange multiplier $\mu$ in a similar way with \S\ref{sec_incomp}, thus yielding the variational principle
\begin{equation} 
\delta \int_0^T \left[  \ell(u, \xi , R, \dot R)+ \int_0^L \mu \left( A(R)- (A_0 \circ \varphi ^{-1} )\partial _s \varphi^{-1} \right) {\rm d}s\right] {\rm d}t=0 \,,
\label{variations_incompressible_simple} 
\end{equation}
with variations $\delta \varphi$ and $\delta R$ vanishing at $t=0,T$ and variations $\delta\mu$, where $u=\dot\varphi\circ\varphi^{-1}$ and $\xi= (\xi_0\circ\varphi^{-1})\partial_s\varphi^{-1}$. In applying \eqref{variations_incompressible_simple}, the terms proportional to $\eta$, $\delta R$, and $\delta \mu$ give, respectively, the fluid momentum conservation, the equation for the radius, and the incompressibility condition which we differentiate once with respect to time for convenience. We get the system
\begin{equation}\label{incompressible_system}
\left\{
\begin{array}{l}
\vspace{0.2cm}\displaystyle\partial _t ( \rho_0 Au)+\partial _s ( \rho_0 Au ^2 )=- A\partial_s \mu \\
\vspace{0.2cm}\displaystyle a \ddot R - \partial _s \frac{\partial F}{\partial R'} + \frac{\partial F}{\partial R} = 2 \pi R  (\mu-p_{\rm ext})
\\
\displaystyle\partial _t A+\partial _s (  A u)=0\, ,
\end{array}
\right.
\end{equation} 
where we recall that $\xi=\rho_0 A$, $ A(R)= \pi R ^2 $, and $\rho_0 =const$.

\subsection{Comparison with previous works} 
\label{sec:comparison} 
\paragraph{Motion of a compressible fluid in a nonuniform tube with rigid walls.} For a \emph{prescribed} cross-section $A(s)$, the compressible system \eqref{inext_unshear_norot}--\eqref{xi_S_cons} reduces to
\begin{equation}
\left\{
\begin{array}{l}
\vspace{0.2cm}\displaystyle\partial _t ( \rho Au)+\partial _s ( \rho Au ^2 )=- A\partial_s p\\
\vspace{0.2cm}\displaystyle\partial _t ( \rho A)+\partial _s ( \rho A u)=0\\ 
\partial _t S+ u \partial _s S=0,
\end{array}
\right.
\label{prescribed_A} 
\end{equation} 
since the second equation in \eqref{inext_unshear_norot} can be discarded for rigid walls.
The first two equations can be written in the form 
\begin{equation} 
\label{Whitham_eqs8_12} 
\left\{
\begin{array}{l}
\vspace{0.2cm} 
\displaystyle\partial _t \rho + u \partial _s\rho + \rho \partial _s u + \rho u  \frac{\partial _sA}{A}=0  \\ 
\partial _t u + u \partial _s  u + \frac{\displaystyle 1}{\displaystyle \rho} \partial _s p =0 
\end{array}
\right.
\end{equation} 
which are exactly the equations (8.1) and (8.2) of \cite{Wi1974} describing the motion of gas in a tube with prescribed cross-section. Moreover, we can notice that since $p=p(\rho,S)$, we have 
\[ 
\partial _t p + u \partial _s p = \pp{p}{\rho} \lp  \partial _t \rho + u  \partial _s\rho\rp + \pp{p}{S} \lp \partial _t S + u \partial _s S \rp \, . 
\] 
Thus, the third equation of \eqref{prescribed_A} is equivalent to
\begin{equation} 
\label{Whitham_eqs8_3}
\partial _t p + u \partial _s p = c^2 \lp  \partial _t \rho + u  \partial _s\rho\rp , 
\end{equation} 
which is exactly equation (8.3) of \cite{Wi1974}.

\paragraph{Fluid motion in straight tubes with deformable walls.}
We are not aware of any work of compressible fluid inside a tube with stretchable walls, as described in \eqref{inext_unshear_norot}--\eqref{xi_S_cons}. We thus study the connection of its incompressible version, \emph{i.e.}, system 
\eqref{incompressible_system}, with the established models for fluid motion inside a tube with stretchable walls \cite{PeLu1998}, see also \cite{KoMa1999,JuHe2007,StWaJe2009,Tang-etal-2009,HeHa2011}, further summarized in review \cite{Se2016},
\begin{equation}\label{collapsible_models}
\left\{
\begin{array}{l}
\vspace{0.2cm}\displaystyle  \rho_0(\partial _t u +u \partial _s u )=- \partial_s p - \tau (u,A)\\
\vspace{0.2cm}\partial _t A+ \partial _s (Au)=0\\
\displaystyle p-p_{\rm ext}= \Phi (A)- T \partial _{ss} A,
\end{array}
\right.
\end{equation} 
where $ \rho _0$ is a constant, $p_{\rm ext}$ is the external pressure,  the function $ \Phi(A) $ is called the tube law and is such that $\Phi (A_0)=0$ with $A_0$ the reference cross-section, and $T\partial _{ss} A$ is a longitudinal stretching. The first two equations of this system coincide with the fluid momentum and incompressibility condition of the system \eqref{incompressible_system}. The term $\tau(u,A)$ describes the friction in the tube and is neglected in our model. Examples of $ \Phi(A) $ encountered in the literature are $\Phi (A)= \beta \big(\sqrt{A}-\sqrt{A_0}\big)/\sqrt{A_0}$ or $ \Phi (A)= \beta \big( \left(A/A_0 \right) ^{ \beta _1}-1\big)$. 
In our case, the last equation in \eqref{collapsible_models} is recovered from the second equation in \eqref{incompressible_system} by a particular choice of $F(R,R')$. 
Indeed, with
\begin{equation}\label{Pedley_Luo_F} 
F(R, R')= A(R) p_{\rm ext}+  \frac{T}{2} \big( \partial _s (A(R)) \big)^2    +f(A(R)),
\end{equation} 
where the function $f(A)$ is such that $ \frac{\partial f}{\partial A}= \Phi (A)$, our equation (with $a=0$) gives
\begin{align*} 
2 \pi R  p&=2 \pi Rp_{\rm ext}+\frac{\partial F}{\partial R} -  \partial _s \frac{\partial F}{\partial R'}  = 2 \pi Rp_{\rm ext}+2\pi R\frac{\partial f}{\partial A}- (2 \pi ) ^2 RT( (R') ^2 +RR'')\\
&=  2 \pi Rp_{\rm ext}+2\pi R\frac{\partial f}{\partial A}- 2\pi R T \partial_{ss}( \pi R ^2 )= 2 \pi R \left( p_{\rm ext}+ \Phi (A)- T \partial _{ss}A \right) ,
\end{align*} 
which coincides with the last equation in \eqref{collapsible_models}. 

\paragraph{Propagation of pressure disturbances in the tube.} 
The question of pressure pulse propagation through arteries is an important problem that has received considerable attention in the literature. The easiest way to derive this equation is to start with a system of mass and momentum conservation laws for the fluid in the following form \cite{Se2016}
\begin{equation}\label{pulse_eq} 
\left\{
\begin{array}{l}
\displaystyle\vspace{0.2cm}\partial_t A+ \partial_s (A u) =0 
\\ 
\displaystyle\partial_t u+ u \partial_su + \frac{1}{\rho_0} \partial_s p =0 \,, \quad \rho_0=const.
\end{array}
\right.
\end{equation} 
This system arises from the first and last equation of our system \eqref{incompressible_system} taking $\mu=p$.
Following \cite{Se2016}, one would take $A=A(p)$, assuming a given function of compliance of the wall with respect to internal pressure. This assumption leads to a closed system of equations for pressure and velocity 
\begin{equation}\label{pulse_eq} 
\left\{
\begin{array}{l}
\displaystyle\vspace{0.2cm} A'(p) \partial_t p + \partial_s \lp A(p) u\rp =0 
\\ 
\displaystyle\partial_t u + u \partial_su + \frac{1}{\rho_0} \partial_s p =0 \,, \quad \rho_0=const. 
\end{array}
\right.
\end{equation} 
A wave equation for the velocities can be obtained in the limit of small deviations from the equilibrium values, see (50) in \cite{Se2016}.

In our system \eqref{incompressible_system} treating the incompressible case, the pressure-like term $\mu$ comes as the Lagrange multiplier for incompressibility. Nevertheless, our system \eqref{incompressible_system} does recover \eqref{pulse_eq}. Indeed, suppose that in the second equation of \eqref{incompressible_system} the elastic function $F$ is only dependent on $R$ and not on its derivatives, and that the inertia terms proportional to $\ddot R$-equation are neglected ($a=0$). Then, the second equation in \eqref{incompressible_system} yields an algebraic relation $\mu=\mu(R)$, or, alternatively, $A=A(\mu)$. The rest of the derivation follows by the substitution of $A=A(\mu)$ into the first and third equation of \eqref{incompressible_system}.

\paragraph{Models involving wall inertia.}
More complex models which involve terms with time derivatives of $R$ in the equation for the radius have the form \cite{QuTuVe2000,FoLaQu2003}: 
\begin{equation} 
\label{inertia_R}
\left\{
\begin{array}{l}
\displaystyle\vspace{0.2cm}\partial_t G + \partial_s \left( \alpha \frac{G^2}{A} \right) + \frac{A}{\rho} \partial_s p + K \frac{G}{A} =0
\\ 
\displaystyle\vspace{0.2cm}\alpha  \frac{\partial^2 R }{\partial t^2} -\gamma _1 \pp{  R}{t} - a  \frac{\partial^2R }{\partial s^2} - c \frac{\partial^3R }{\partial s^2 \partial t} + b R = p-p_{\rm ext}
\\ 
\displaystyle\partial_t A + \partial_s G =0 ,
\end{array}
\right.
\end{equation}
where $G$ is the flux through the cross-section of the tube and $\alpha$, $a$, $c$, $K$ are constant coefficients. In our approximation of inviscid plug flow, $G=A u$, equations \eqref{inertia_R}  have a dimensionless coefficient of order $1$ in front of the term $\partial_s (A u)$ since the averaging is done based on the Poiseuille flow profile rather than inviscid plug flow treated in our paper. 
Keeping that in mind, the third equation of continuity  coincides with ours, apart from that dimensionless coefficient. The first equation of \eqref{inertia_R} coincides with the first equation of our system 
\eqref{incompressible_system}, apart from the last term proportional to $K$. That term comes from the fluid friction and is not present in our theory. Note however, that the existence of such a term for describing friction can be introduced in the variational network as well \cite{FGBPu2015}, although we postpone this discussion for further studies. 
The term proportional to $\gamma$ in the second equation of \eqref{inertia_R} comes from the dissipative motion of the tube's wall and is not present in our variational model, although it could be introduced using the Lagrange-d'Alembert principle to incorporate external forces. Although this dissipation term is important for arterial flow applications, we shall not consider  it as our primary focus is the tube in air where the dissipation of the motion of tube's wall can be neglected. A complete variational formulation including all these effects as irreversible processes can be developed by combining the approach in this paper with that of \cite{FGBYo2017a,FGBYo2017b}. The terms proportional to the coefficients $a$ can be incorporated in our theory by a particular choice of $F(R,R')$. The term proportional to the coefficient $c$ describes viscoelastic effects essential for arteries flow and is not present in our theory. We shall note that more complex equations for the radius $R$, for example, involving terms proportional to $\partial^4 R/\partial s^4$ can be naturally incorporated into our system if the elastic energy of wall deformation is allowed to depend on $R''$, \emph{i.e.}, $F=F(R,R',R'')$. Finally, we note that the more general system \eqref{full_3D_incompressible}  provides a rigorous treatment of the longitudinal deformations of the rod. 

\rem{ 
Our model gives
\[
p-p_{\rm ext}= \Phi (A)- T \partial _{ss} A+ \frac{1}{2 \pi R} \left( a R \ddot R+ \frac{1}{2} a\dot R ^2  \right) .
\]

\medskip

Before going to the full Cosserat case, we can think about the analogue of the form $A( \boldsymbol{\Omega} , \boldsymbol{\Gamma} )= A_0- \frac{K_{\boldsymbol{\Gamma} }}{2}|\boldsymbol{\Omega} | ^2 -D_{ \boldsymbol{\Gamma} } \mathbf{E} _1 \cdot (\boldsymbol{\Gamma} -\mathbf{E} _1)- \frac{K_{ \boldsymbol{\Gamma} }}{2}|\boldsymbol{\Gamma} -\mathbf{E} _1| ^2 $ for the unshearable, but extensible and expandable straight tube. With $ \Lambda = Id$ and $ \mathbf{r} =(r',0,0)$, this gives
\[
A(R,r)=\pi  R^2 \Big( 1- D_{\boldsymbol{\Gamma} }(r'-1)- \frac{K_{\boldsymbol{\Gamma} }}{2} (r'-1) ^2 \Big).
\]
} 

\subsection{Traveling wave solutions and numerical simulations} 
\label{sec:traveling_waves}
Let us now study the solutions representing nonlinear traveling waves, \emph{i.e.} solutions of  \eqref{inext_unshear_norot}--\eqref{xi_S_cons} of the form $f(t,s)=f(s-ct)$ where $c$ is the speed of the wave to be determined. In what follows, we denote the derivative with respect to the argument $x=s-ct$ by primes.  Conservation of entropy \eqref{xi_S_cons} gives $S'=0$ or $S=S_0$ away from the shock, so the dynamics is isentropic. Therefore, we will denote 
$p=p(\rho)$ for a given $S=S_0$. 

The $\xi$-equation of  \eqref{xi_S_cons} leads to the integrated conservation of mass written as 
\begin{equation} 
\pi R^2 \rho (u-c) = \pi R_\pm ^2 \rho _\pm(u_\pm-c)\,,
\label{Q_cons_wave}
\end{equation} 
where $ \rho_\pm$, $u_\pm$, $R_\pm$ are the values at $x=\pm\infty$.

The first equation in \eqref{inext_unshear_norot} can be combined with the conservation of mass to provide an integral of motion \textit{away from the shock}: 
\begin{equation} 
\label{meq2_wave}
 u \Big( \frac{1}{2}  u-c\Big)  + h(\rho, S_\pm) = u_\pm\Big( \frac{1}{2}  u_\pm-c\Big)  + h(\rho_\pm,S_\pm)\,,
\end{equation}
where $h(\rho,S)=p/\rho+e$ is the enthalpy.
For given $R$ and $c$, equations \eqref{Q_cons_wave} and \eqref{meq2_wave} give two algebraic equations for the two unknowns $u$ and $\rho$, denoted $u=u(R;c)$ and $\rho=\rho(R;c)$.
The second equation in \eqref{inext_unshear_norot} reduces to 
\begin{equation} 
\label{Req_wave} 
\!\!\!\!- a c^2   R'' +\frac{d}{d x}\pp{F}{R'} - \pp{F}{R}+ 2 \pi R\, \left( p(\rho,  S_\pm) -p_{\rm ext} \right) =0\,,
\end{equation} 
with $R'(x) \rightarrow 0$ when $x \rightarrow \pm\infty$.

\medskip

\noindent \textit{Numerical solutions.}
To illustrate our method, we take an undisturbed tube which is made out of 1mm soft rubber wall with inner radius $ R_-=5$\,cm, wall thickness $h=1$\,mm, Young modulus $E=5$\,MPa, and Poisson ratio $\sigma=0.5$, giving shear modulus $G=1.66$\,MPa through $E=2 G (1+\sigma )$. The  density of the rubber is  $\rho_s=1522$\,kg/m$^3$ and the gas in its initial state $ x \rightarrow - \infty$ is at rest ($ u_-=0$), and at atmospheric pressure with room temperature, corresponding to the density  $\rho=1.225$\,kg/m$^3$. We assume that the gas is perfect with adiabatic coefficient $\gamma=C_p/C_v=1.4$ and 
equation of state $p= (\gamma-1) \rho e$,  with the external pressure $p_- = p_{\rm ext}= 1$\,atm. 
We use the simplest elastic energy $2 F(R,R') =  \beta_1 R'^2+ \beta_2 (R-R_-)^2$ \cite{QuTuVe2000} with:
\vspace{-1.5mm}
\begin{equation} 
\label{beta_1_2} 
 \beta_1=  2 \pi  K G h R_- \, , \qquad 
 \beta_2 = \frac{ 2 \pi   }{R_-} \frac{E h}{1- \sigma^2}  \, , \qquad a = 2 \pi R_- \rho_s h, 
\end{equation}
with $K$ the Timoshenko coefficient, $\beta_1 \simeq 1.14\cdot 10^5$\,N/m, $\beta_2 \simeq 8.37 \cdot 10^5$\,N$/ m^3$, 
$a \simeq 0.4782$\,kg/m.
We are looking for a solution $R=R(x)$ having a shock satisfying the Rankine-Hugoniot conditions.

\rem{ 
\begin{equation} 
\begin{aligned} 
& \left[ \rho (u-c) \right]=0\, , \quad  \left[ \rho (u-c)^2  | \bGam |^2  + \rho(u-c) c \bGam \cdot \Lambda^{-1} \mathbf{E}_1 + p \right]  =0 
\\ 
& \big[  \rho e  (u-c)  + \frac{1}{2}  \rho (u-c) \left( c^2 + 2 (u-c) c \bGam \cdot \Lambda^{-1}\mathbf{E}_1 + (u-c)^2 |\bGam|^2  \right) +   \\ 
& \qquad \qquad \qquad \qquad 
\frac{p}{|\bGam|^2} \left( |\bGam|^2 (u-c) + c \bGam \cdot \Lambda^{-1}\mathbf{E}_1  \right) \big] = 0 
\label{RH_moving_1} 
\end{aligned} 
\end{equation} 
} 
We are looking for a solution that is smooth for $x>0$ and $x<0$, tends to  steady state $R=R_-$ as $R \rightarrow -\infty$ and also tends to a fixed state $R=R_+$ as $x \rightarrow + \infty$. Additionally, the solution $R(x)$  and $R'(x)$ is continuous  at the shock, whereas the variable $u(x)$, $\rho(x)$, and $S(x)$ have a jump at the shock satisfying Rankine-Hugoniot conditions. For a chosen value of $R=R_s$ at the shock, such solution will only exist for particular  value of $c$, also yielding the limiting value $R=R_+$ that is dependent on the choice of $R_s$. As is shown in \cite{Wi1974}, the jump of entropy  across the shock is always positive, $S_{+}-S_{-}>0$. On Figure~\ref{fig:solution_tube}, we present a solution computed at $c \simeq 447$\,m/s, with the limiting pressure behind the shock wave being $p_+\simeq 1.826$\,atm. The solution is presented in the dimensionless coordinates $x/R_-$ and $(y,z)/R_-$. 

We shall also note that physically, one expects monotonic solution in $R(x)$. From \eqref{Req_wave}, such solutions only exist 
for $c<c_*\sqrt{\beta_1/a} \simeq 488$\,m/s. 
\begin{figure}[h]
\centering 
\includegraphics[width=0.8\textwidth]{./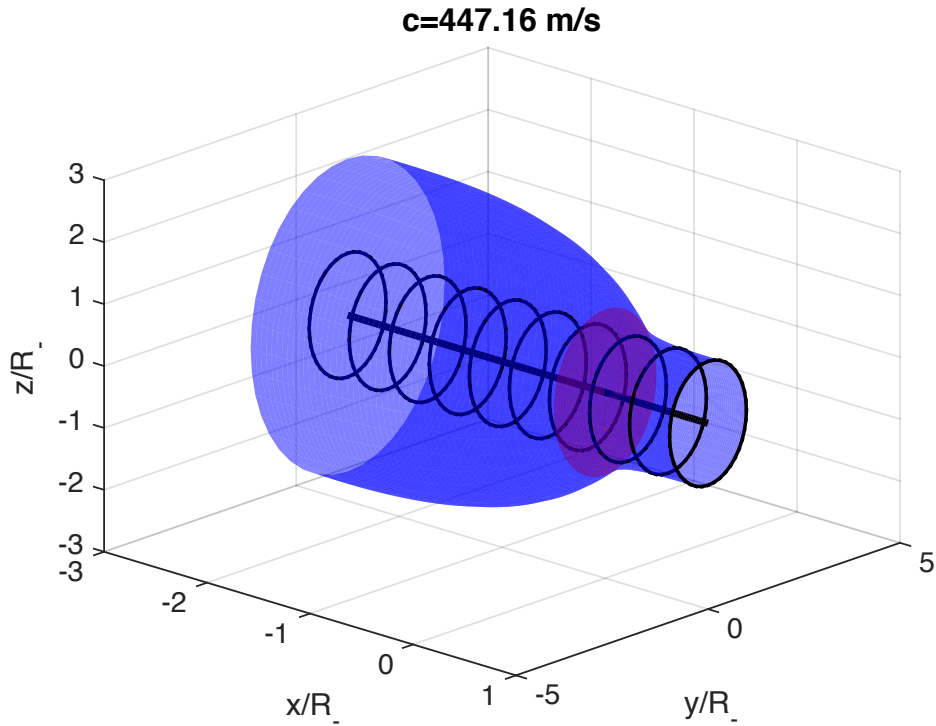} 
\caption{A shock propagating along a tube. 
Propagating shock is shown in red, moving in the direction of positive  $x$ with velocity $c$, \emph{i.e.} $s=x -ct$. The position  of the shock is chosen to be at $x=0$. The pressure is increasing towards larger $x$, reaching the limiting value 1.965 atm. The speed of the shock is computed to be $c \sim 447$ m/s. 
The centerline of the tube is shown with a solid black line, and the undisturbed position of the cross-section of the tube are shown by solid circles. 
\label{fig:solution_tube}
}  
\end{figure}

It is clear that the tube plays an important role in the propagation of the shock, particularly on shock speed. In order to investigate this effect further, we study the dependence of the shock's speed on the effective strength of the shock which we define as $z=(p_+-p_-)/p_-$. For experimental realization of the flow inside the tube,  $p_-$ and $p_+$ which are pressures at the end of the tube, are variable parameters, and the strength of the shock itself at the tube is computed subject to setting these parameters.  This is in contrast to the classic 1D shocks where specifying the pressures immediately before and after the shock determine the result for the whole system. We plot the Mach number $ M= \frac{c-u_-}{{c_s}_-}$ defined as the ratio of shock's speed to the sound speed $ {c_s}_- ^2 =  \gamma \frac{p_-}{ \rho _-} $ in front $x \rightarrow - \infty$ of the shock.
The results are summarized on Figure~\ref{fig:ShockSpeeds}, plotted with a solid line. On the same Figure, we also plot the Mach number computed from the classical formula $M=\left( 1+ \frac{1+\gamma}{2 \gamma} z \right)^{1/2}$ from \cite{Wi1974}. Note the discrepancy between the Mach numbers for the classic shock and our case which is purely due to the effect of the tube. 
\begin{figure}[h]
\centering 
\includegraphics[width=0.8 \textwidth]{./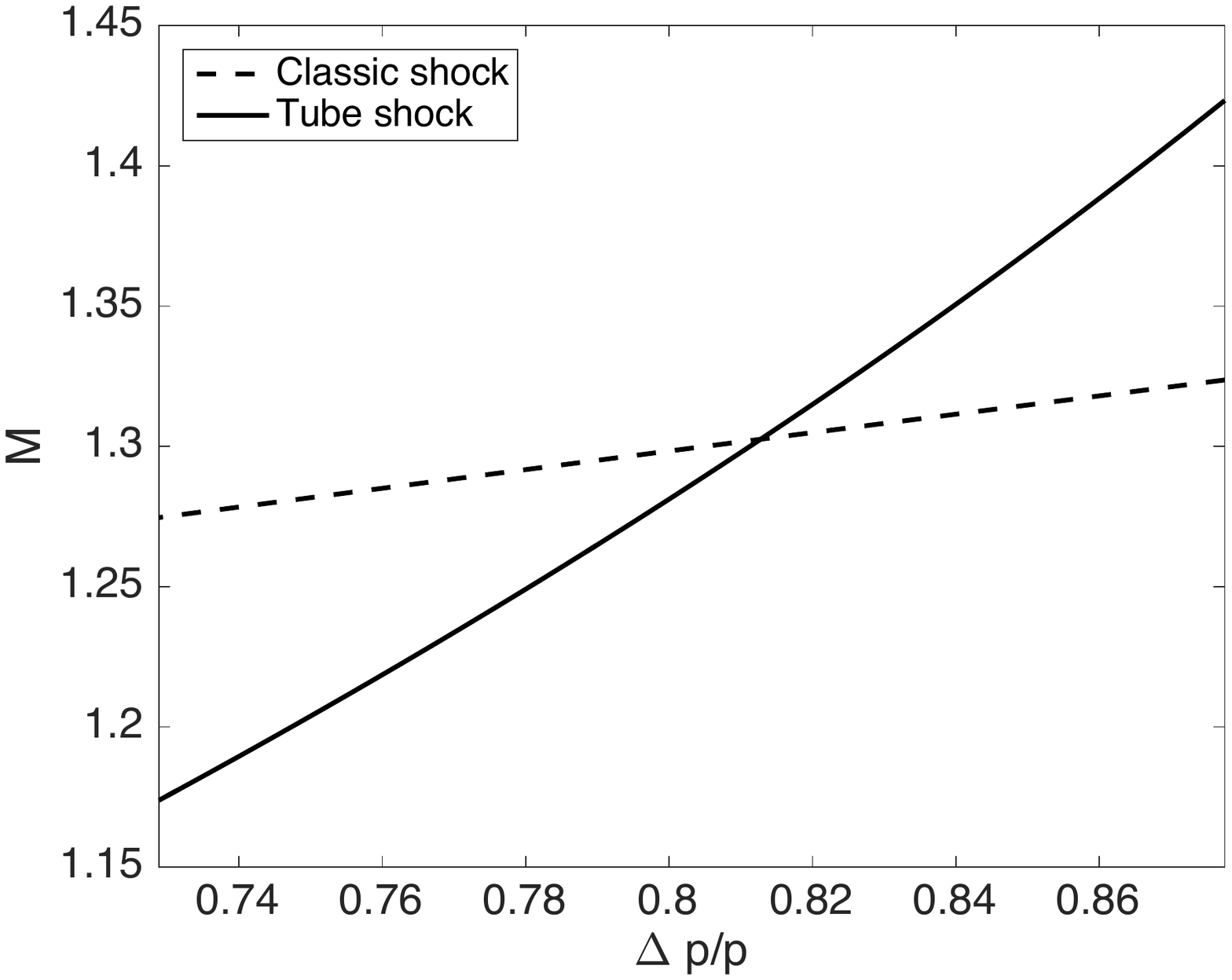} 
\caption{
Mach number of the shock as a function of shock's strength $  z=\Delta p/p$ (solid line). We also present the formula for the Mach number from \cite{Wi1974} (dashed line). 
\label{fig:ShockSpeeds}
}  
\end{figure}

\section{Conclusions}
In this paper, we have derived the general equations of motion for a tube with compliant walls conveying an inviscid fluid which may be compressible or incompressible. The system is capable to incorporate arbitrary three-dimensional deformations, treat discontinuities in the fluid flow such as shock waves, and incorporate arbitrary elasticity properties of the compliant walls. In the model presented here, the cross-section of the tube is assumed circular, as it is described by a single function $R(t,s)$. More generally, one can extend this model to include an elliptical cross-section of the tube, parameterized  by \emph{e.g.} its semi-axes $a(t,s)$ and $b(t,s)$, or even more complex shapes with several parameters characterizing it.  We shall note that such an extension is mathematically doable in our model, although the resulting equations become quite cumbersome. It is interesting to investigate, however, if these equations lead to an analytical description of the instability, and even local collapse when the cross-sectional area of the tube tends to $0$ at a particular value of $s$ and $t$. Such a finite-time singularity, if it exists, would be an interesting feature of the problem. 

Another interesting development of the model would be the inclusion of friction using a lubrication approach, as developed in \cite{Ch-etal-2018,ShCh2018}. It would then be natural to merge this friction term with the recently developed variational approach to continuum systems with irreversible processes \cite{FGBYo2017a,FGBYo2017b}. This approach should  be able to incorporate both the inertial and the frictional terms in the fluid, the tube, and the wall motion consistently in a single model. These and other interesting questions will be addressed in future work.

\section{Acknowledgements} 
We gratefully acknowledge discussions with Profs.~I.~C.~Christov, D.~D.~Holm, T.~W.~Secomb and P.~Vorobieff.  FGB is partially supported by the ANR project GEOMFLUID 14-CE23-0002-01. VP was partially supported by NSERC Discovery Grant and the University of Alberta.

{\footnotesize 
\bibliographystyle{unsrt}
\bibliography{Garden-hoses}

\begin{thebibliography}{10}

\bibitem{PeLu1998}
T.~J. Pedley and X.~Y. Luo.
\newblock The effects of wall inertia on flow in a two-dimensional collapsible
  channel.
\newblock {\em Journal of Fluid Mechanics}, 363:253--280, 1998.

\bibitem{QuTuVe2000}
A.~Quarteroni, M.~Tuveri, and A.~Veneziani.
\newblock Computational vascular fluid dynamics: problems, models and methods.
\newblock {\em Comput Visual Sci}, 2:163--197, 2000.

\bibitem{FoLaQu2003}
L.~Formaggia, D.~Lamponi, and A.~Quarteroni.
\newblock One-dimensional models for blood flow in arteries.
\newblock {\em Journal of Engineering Mathematics}, 47:251--276, 2003.

\bibitem{StWaJe2009}
P.~S. Stewart, S.~L. Waters, and O.~E. Jensen.
\newblock Local and global instabilities of flow in a flexible-walled channel.
\newblock {\em Eur. J. Mech. B: Fluids}, 28:541--557, 2009.

\bibitem{Tang-etal-2009}
D.~Tang, Y.~Yang, C.~Yang, and D.~N. Ku.
\newblock A nonlinear axisymmetric model with fluid-wall interactions for
  steady viscous flow in stenotic elastic tubes.
\newblock {\em Transactions of the ASME}, 121:494--501, 2009.

\bibitem{ElKaSh1989}
D.~Elad, R.~D. Kamm, and A.~H. Shapiro.
\newblock Steady compressible flow in collapsible tubes: application to forced
  expiration.
\newblock {\em J. Applied Physiology}, 203:401--418, 1989.

\bibitem{MaFl2010}
E.~Marchandise and P.~Flaud.
\newblock Accurate modelling of unsteady flows in collapsible tubes.
\newblock {\em Computer Methods in Biomechanics and Biomedical Engineering},
  13:279, 2010.

\bibitem{Do2016}
G.~Donovan.
\newblock Systems-level airway models of bronchoconstriction.
\newblock {\em Wiley Interdisciplinary Reviews: Systems Biology and Medicine},
  8:459, 2016.

\bibitem{Pe1992}
T.~J. Pedley.
\newblock Longitudinal tension variation in collapsible channels: a new
  mechanism for the breakdown of steady flow.
\newblock {\em Journal of Biomechanical Engineering}, 114:60--67, 1992.

\bibitem{KuMa1999}
K.~Kounanis and D.~S. Mathioulakis.
\newblock Experimental flow study within a self oscillating collapsible tube.
\newblock {\em J. Fluids and Struct.}, 13:61--73, 1999.

\bibitem{LuPe1998}
X.~Y. Luo and T.~J. Pedley.
\newblock Modelling flow and oscillations in collapsible tubes.
\newblock {\em Theoretical and Computational Fluid Dynamics}, 10:277--294,
  1998.

\bibitem{GrJe2004}
J.~B. Grotberg and O.~E. Jensen.
\newblock Biofluid mechanics in flexible tubes.
\newblock {\em Ann. Rev. of Fluid Mech}, 36:121--147, 2004.

\bibitem{HeHa2011}
M.~Heil and A.~L. Hazel.
\newblock Fluid-structure interaction in internal physiological flows.
\newblock {\em Ann. Rev. Fluid Mech.}, 43:141--62, 2011.

\bibitem{AsHa1950}
H.~Ashley and G.~Haviland.
\newblock Bending vibrations of a pipe line containing flowing fluid.
\newblock {\em J. Appl. Mech.}, 17:229--232, 1950.

\bibitem{Be1961a}
B.~T. Benjamin.
\newblock Dynamics of a system of articulated pipes conveying fluid {I.}
  {T}heory.
\newblock {\em Proc. Roy. Soc. A}, 261:457--486, 1961.

\bibitem{Be1961b}
B.~T. Benjamin.
\newblock Dynamics of a system of articulated pipes conveying fluid {II.}
  {E}xperiments.
\newblock {\em Proc. Roy. Soc. A}, 261:487--499, 1961.

\bibitem{GrPa1966a}
R.~W. Gregory and M.~P. Pa\"idoussis.
\newblock Unstable oscillation of tubular cantilevers conveying fluid {I.}
  {T}heory.
\newblock {\em Proc. R. Soc. A}, 293:512--527, 1966.

\bibitem{Pa1970}
M.~P. Pa\"idoussis.
\newblock Dynamics of tubular cantilevers conveying fluid.
\newblock {\em Int. J. Mech. Eng. Sci.}, 12:85--103, 1970.

\bibitem{PaIs1974}
M.~P. Pa\"idoussis and N.~T. Issid.
\newblock Dynamic stability of pipes conveying fluid.
\newblock {\em J. Sound and Vibrations}, 33:267--294, 1974.

\bibitem{Pa1998}
M.~P. Pa\"idoussis.
\newblock {\em Fluid-Structure interactions. Slender structures and axial flow,
  volume 1}.
\newblock Academic Press, London, 1998.

\bibitem{ShMi2001}
S.~Shima and T.~Mizuguchi.
\newblock Dynamics of a tube conveying fluid.
\newblock {\em arxiv:nlin.CD/0105038}, 2001.

\bibitem{DoLa2002}
O.~Doar\'e and E.~de~Langre.
\newblock The flow-induced instability of long hanging pipes.
\newblock {\em Eur. J. Mech. A Solids}, 21:857--867, 2002.

\bibitem{PaLi1993}
M.~P. Pa\"idoussis and G.~X. Li.
\newblock Pipes conveying fluid: A model dynamical problem.
\newblock {\em Journal of Fluids and Structures}, 7:137--204, 1993.

\bibitem{Pa2004}
M.~P. Pa\"idoussis.
\newblock {\em Fluid-Structure interactions. Slender structures and axial flow,
  volume 2}.
\newblock Academic Press, London, 2004.

\bibitem{AkIvKoNe2013}
L.~D. Akulenko, M.~I. Ivanov, L.~I. Korovina, and S.~V. Nesterov.
\newblock Basic properties of natural vibrations of an extended segment of a
  pipeline.
\newblock {\em Mechanics of Solids}, 48:458--472, 2013.

\bibitem{AkGeNe2015}
L.~D. Akulenko, D.~V. Georgievskii, and S.~V. Nesterov.
\newblock Transverse vibration spectrum of a part of a moving rod under a
  longitudinal load.
\newblock {\em Mech. Solids}, 50:227--231, 2015.

\bibitem{AkGeNe2016}
L.~D. Akulenko, D.~V. Georgievskii, and S.~V. Nesterov.
\newblock Spectrum of transverse vibrations of a pipeline element under
  longitudinal load.
\newblock {\em Doklady Akademii Nauk}, 467:36--39, 2016.

\bibitem{GrPa1966b}
R.~W. Gregory and M.~P. Pa\"idoussis.
\newblock Unstable oscillation of tubular cantilevers conveying fluid {II.}
  {E}xperiments.
\newblock {\em Proc. R. Soc. A}, 293:528--542, 1966.

\bibitem{KuSa2005}
S.~Kuronuma and M.~Sato.
\newblock Stability and bifurcations of tube conveying flow.
\newblock {\em J. Phys. Soc. Japan}, 72:3106--3112, 2003.

\bibitem{CaCr2009}
F.~Castillo Flores and A.~Cros.
\newblock Transition to chaos of a vertical collapsible tube conveying air
  flow.
\newblock {\em J Physics: Conference Series}, 166:012017, 2009.

\bibitem{Cr-etal-2012}
A.~Cros, J.~A.~R. Romero, and F.~Castillo Flores.
\newblock {\em Sky Dancer: A Complex Fluid-Structure Interaction}, pages
  15--24.
\newblock Experimental and Theoretical Advances in Fluid Dynamics:
  Environmental Science and Engineering. Springer, 2012.

\bibitem{SeLiPa1994}
G.~X.~Li C.~Semler and M.~P. Pa\"idoussis.
\newblock The non-linear equations of motion of pipes conveying fluid.
\newblock {\em J. Sound and Vibration}, 169:577--599, 1994.

\bibitem{MoPa2009}
Y.~Modarres-Sadeghi and M.~P. Pa\"idoussis.
\newblock Nonlinear dynamics of extensible fluid-conveying pipes supported at
  both ends.
\newblock {\em Journal of Fluids and Structures}, 25:535--543, 2009.

\bibitem{GaPaAm2013}
M.~Ghayesh, M.~P. Pa\"idoussis, and M.~Amabili.
\newblock Nonlinear dynamics of cantilevered extensible pipes conveying fluid.
\newblock {\em J. Sound and Vibration}, 332:6405--6418, 2013.

\bibitem{Zh2008}
V.~N. Zhermolenko.
\newblock Application of the method of extremal deviations to the study of
  forced parametric bend oscillations of a pipeline (in russian).
\newblock {\em Autom. Telemech.}, 9:10--32, 2008.

\bibitem{BeGoTa2010}
M.~A. Beauregard, A.~Goriely, and M.~Tabor.
\newblock The nonlinear dynamics of elastic tubes conveying a fluid.
\newblock {\em International Journal of Solids and Structures}, 47:161--168,
  2010.

\bibitem{RiPe2015}
J.~Rivero-Rodriguez and M.~Perez-Saborid.
\newblock Numerical investigation of the influence of gravity on flutter of
  cantilevered pipes conveying fluid.
\newblock {\em Journal of Fluids and Structures}, 55:106--121, 2015.

\bibitem{BoRoSa2002}
N.~Bou-Rabee, L.~Romero, and A.~Salinger.
\newblock A multiparameter, numerical stability analysis of a standing
  cantilever conveying fluid.
\newblock {\em SIAM J. Applied Dyn. Sys.}, 1:190--214, 2002.

\bibitem{El2005}
I.~Elishakoff.
\newblock Controversy associated with the so-called follower forces: Critical
  overview.
\newblock {\em Applied Mech. Reviews}, 58:117--142, 2005.

\bibitem{Sv1987}
V.~A. Svetlitskii.
\newblock {\em Mechanics of Rods (in Russian)}, volume~2.
\newblock Moscow, 1987.

\bibitem{Mo1965}
A.~A. Movchan.
\newblock On one problem of stability of a pipe with moving fluid (in russian).
\newblock {\em Applied Mathematics and Mechanics (Prikladnaya Mathematika i
  Mekhanika)}, 29:760--762, 1965.

\bibitem{Mu1965}
A.~A. Mukhin.
\newblock Dynamic criterium of stability of a pipeline with moving fluid (in
  russian).
\newblock {\em Izvestiya Akad. Nauk USSR. Ser. Mekhanika}, 3:154--155, 1965.

\bibitem{Il1969}
M.~A. Ilgamov.
\newblock {\em Oscillations of elastic shells containing fluid and gas (in
  Russian)}.
\newblock Nauka, 1969.

\bibitem{Anni1970}
T.~E. Anni, E.~L Martin, and R.~N Duby.
\newblock Hydroelastic instability of pipes with constant radius of curvature
  with fluid (in russian).
\newblock {\em Applied Mechanics (Prikladnaya Mekhanika)}, 6:244--249, 1970.

\bibitem{VoGr1973}
A.~S Vol'mir and M.~S. Gratch.
\newblock Oscillations of a shell with moving fluid (in russian).
\newblock {\em Izvestiya Akad. Nauk USSR. Ser. Mekhanika Tverdogo Tela},
  6:162--166, 1973.

\bibitem{Sv1978}
V.~A. Svetlitskii.
\newblock Small oscillations of spatially curved pipelines (in russian).
\newblock {\em Applied mechanics (Prikladnaya Mekhanika)}, 14:70--75, 1978.

\bibitem{Do1979}
P.~D. Dotsenko.
\newblock Some studies of auto-oscillations of straight pipelines with fluid.
\newblock {\em Applied Mechanics (Prikladnaya Mekhanika)}, 15:69--75, 1979.

\bibitem{Ch1984}
S.~V. Chelomey.
\newblock On dynamical stability of elastic systems conveying moving pulsating
  fluid (in russian).
\newblock {\em Izvestiya Akad. Nauk USSR. Ser. Mekhanika Tverdogo Tela},
  5:170--174, 1984.

\bibitem{So2005}
V.~G. Sokolov and A.~V Bereznev.
\newblock Solution of the problem for free oscillations of curved pipelines
  with moving fluid (in russian).
\newblock {\em Izvestia Vuzov, Oil and gas}, pages 80--84, 2005.

\bibitem{AmAl2015}
R.~Yu. Amenzade and A.~B. Aliev.
\newblock Propagation of waves in fluid moving in an elastic tube taking into
  account viscoelastic friction of surrounding media.
\newblock {\em Education}, 4:6--9, 2015.

\bibitem{Ch1972}
S-S Chen.
\newblock Vibration and stability of a uniformly curved tube conveying fluid.
\newblock {\em The Journal of the Acoustical Society of America},
  51(1B):223--232, 1972.

\bibitem{MiPaVa1988a}
A.~K. Misra, M.~P. Pa{\"\i}doussis, and K.~S. Van.
\newblock On the dynamics of curved pipes transporting fluid. part i:
  inextensible theory.
\newblock {\em Journal of Fluids and Structures}, 2(3):221--244, 1988.

\bibitem{MiPaVa1988b}
A.~K. Misra, M.~P. Pa{\"\i}doussis, and K.~S. Van.
\newblock On the dynamics of curved pipes transporting fluid part ii:
  Extensible theory.
\newblock {\em Journal of Fluids and Structures}, 2(3):245--261, 1988.

\bibitem{DuRo1992}
C.~Dupuis and J.~Rousselet.
\newblock The equations of motion of curved pipes conveying fluid.
\newblock {\em Journal of Sound and Vibration}, 153(3):473--489, 1992.

\bibitem{DoMo1976}
R.~W. Doll and C.~D. Mote.
\newblock On the dynamic analysis of curved and twisted cylinders transporting
  fluids.
\newblock {\em Journal of Pressure Vessel Technology}, 98(2):143--150, 1976.

\bibitem{AiGi1990}
R.~Aithal and G.~S. Gipson.
\newblock Instability of internally damped curved pipes.
\newblock {\em Journal of engineering mechanics}, 116(1):77--90, 1990.

\bibitem{AmPePa1999a}
M.~Amabili, F.~Pellicano, and M.~P. Paidoussis.
\newblock Non-linear dynamics and stability of circular cylindrical shells
  containing flowing fluid. part i: Stability.
\newblock {\em Journal of Sound and Vibration}, 225:655--699, 1999.

\bibitem{AmPePa1999b}
M.~Amabili, F.~Pellicano, and M.~P. Paidoussis.
\newblock Non-linear dynamics and stability of circular cylindrical shells
  containing flowing fluid. part i: Large-amplitude vibrations without flow.
\newblock {\em Journal of Sound and Vibration}, 228:1103--1124, 1999.

\bibitem{FGBPu2014}
F.~Gay-Balmaz and V.~Putkaradze.
\newblock Exact geometric theory for flexible, fluid-conducting tubes.
\newblock {\em C.R. Acad. Sci. Paris, S\'erie M\'ecanique}, 342:79--84, 2014.

\bibitem{FGBPu2015}
F.~Gay-Balmaz and V.~Putkaradze.
\newblock On flexible tubes conducting fluid: geometric nonlinear theory,
  stability and dynamics.
\newblock {\em J. Nonlin. Sci.}, 25:889--936, 2015.

\bibitem{FGBGePu2018}
F.~Gay-Balmaz, D.~Georgievskii, and V.~Putkaradze.
\newblock Stability of helical tubes conveying fluid.
\newblock {\em J. Fluids and Structures}, 78:146--174, 2018.

\bibitem{KoMa1999}
K.~Kounanis and D.~S. Mathioulakis.
\newblock Experimental flow study within a self-oscillating collapsible tube.
\newblock {\em Journal of Fluids and Structures}, 13:61--73, 1999.

\bibitem{JuHe2007}
A.~Juel and A.~Heap.
\newblock The reopening of a collapsed fluid-filled elastic tube.
\newblock {\em Journal of Fluid Mech.}, 572:287--310, 2007.

\bibitem{Se2016}
Timothy~W. Secomb.
\newblock Hemodynamics.
\newblock {\em Comprehensive Physiology}, 6:975--1003, 2016.

\bibitem{FGBPu2016}
F.~Gay-Balmaz and V.~Putkaradze.
\newblock Variational discretizations for the dynamics of flexible tubes
  conveying fluid.
\newblock {\em Compte Rendus M\'ecanique}, 344:769--775, 2016.

\bibitem{SiMaKr1988}
J.~C. Sim\'o, J.~E. Marsden, and P.~S. Krishnaprasad.
\newblock The {H}amiltonian structure of nonlinear elasticity: The material and
  convective representations of solids, rods, and plates.
\newblock {\em Arch. Rat. Mech. Anal.}, 104:125--183, 1988.

\bibitem{HoPu2009}
D.~D. Holm and V.~Putkaradze.
\newblock Nonlocal orientation-dependent dynamics of charged strands and
  ribbons.
\newblock {\em C. R. Acad. Sci. Paris, S\'er. I: Math\'ematique},
  347:1093--1098, 2009.

\bibitem{ElGBHoPuRa2010}
D.~Ellis, D.~D. Holm, F.~Gay-Balmaz, V.~Putkaradze, and T.~Ratiu.
\newblock Symmetry reduced dynamics of charged molecular strands.
\newblock {\em Arch. Rat. Mech. Anal.}, 197:811--902, 2010.

\bibitem{CoCo1909}
E.~Cosserat and F.~Cosserat.
\newblock {\em Th\'eorie des corps d\'eformables}.
\newblock Hermann, Paris, 1909.

\bibitem{GBRa2009}
F.~Gay-Balmaz and T.~S. Ratiu.
\newblock The geometric structure of complex fluids.
\newblock {\em Adv. Appl. Math.}, 42:176--275, 2009.

\bibitem{DeFGBKoRa2014}
F.~Demoures, F.~Gay-Balmaz, M.~Kobilarov, and T.~S. Ratiu.
\newblock Multisymplectic {L}ie group variational integrator for a
  geometrically exact beam in $ \mathbb{R} ^3 $.
\newblock {\em Communications in Nonlinear Science and Numerical Simulation},
  19:3492--3512, 2014.

\bibitem{DeGBBLeObRaWe2014}
F.~Demoures, F.~Gay-Balmaz, S.~Leyendecker, S.~Ober-Bl\"obaum, T.~S. Ratiu, and
  Y.~Weinand.
\newblock Discrete variational {L}ie group discretization of geometrically
  exact beam dynamics.
\newblock {\em Numerische Mathematiks, to appear, DOI
  10.1007/s00211-014-0659-4.}, 2014.

\bibitem{Wi1974}
G.~B. Whitham.
\newblock {\em Linear and Nonlinear Waves}.
\newblock John Wiley and Sons, 1974.

\bibitem{MaRa2002}
J.~E. Marsden and T.~S. Ratiu.
\newblock {\em Introduction to Mechanics and Symmetry: A Basic Exposition of
  Classical Mechanical Systems (Texts in Applied Mathematics)}.
\newblock Springer, 2nd edition, 2002.

\bibitem{HoMaRa1998}
D.~D. Holm, J.~E. Marsden, and T.~Ratiu.
\newblock The {E}uler-{P}oincar\'e equations and semidirect products with
  applications to continuum theories.
\newblock {\em Adv. Math.}, 137:1--81, 1998.

\bibitem{GBTr2010}
F.~Gay-Balmaz and C.~Tronci.
\newblock Reduction theory for symmetry breaking with applications to nematic
  systems.
\newblock {\em Phys. D}, 239:1929--1947, 2010.

\bibitem{MaHu1994}
J.~E. Marsden and T.~J.~R. Hughes.
\newblock {\em Mathematical Foundations of Elasticity}.
\newblock Dover, 1994.

\bibitem{FGBYo2017a}
F.~Gay-Balmaz and H.~Yoshimura.
\newblock A lagrangian variational formulation for nonequilibrium
  thermodynamics. part i: Discrete systems.
\newblock {\em J. Geometry and Physics}, 111:169--193, 2017.

\bibitem{FGBYo2017b}
F.~Gay-Balmaz and H.~Yoshimura.
\newblock A lagrangian variational formulation for nonequilibrium
  thermodynamics. part ii: Continuous systems.
\newblock {\em J. Geometry and Physics}, 111:194--212, 2017.

\bibitem{Ch-etal-2018}
I.~C. Christov, V.~Cognet, T.~C. Shidhore, and H.~A. Stone.
\newblock Flow rate - pressure drop relation for deformable shallow
  microfluidic channels.
\newblock {\em J. Fluid Mech}, 841:267--286, 2018.

\bibitem{ShCh2018}
T.~C. Shidhore and I.~C. Christov.
\newblock Static response of deformable microchannels: a comparative modelling
  study.
\newblock {\em Journal of Physics: Condensed Matter}, 30:054002, 2018.

\end{thebibliography}
}

\end{document}